\def\verbatim@font{\linespread{1}\normalfont\ttfamily}
\newtheorem{assumption}{Assumption}
\newtheorem{definition}{Definition}
\newtheorem{theorem}{Theorem}
\newtheorem{proposition}{Proposition}
\newtheorem{theorem*}{Theorem}
\newtheorem{proposition*}{Proposition}
\newtheorem{corollary*}{Corollary}
\newtheorem{procedure}{Procedure}
\newtheorem{example}{Example}
\newtheorem{lemma}{Lemma}
\newtheorem{corollary}{Corollary}
\newtheorem{remark}{Remark}
\newcommand{\nn}{\newline}
\newcommand{\Adom}{\mathbb{A}}
\newcommand{\Ldom}{\mathbb{L}}
\newcommand{\Wdom}{\mathbb{W}}
\newcommand{\iv}{\mathds{1}}
\newcommand{\Ib}{ \mathbb{U} }
\newcommand{\Aset}{\Adom}
\newcommand{\Hset}{\Wdom}
\newcommand{\Lset}{\Ldom}
\newcommand{\h}{c}
\newcommand{\nak}{\mathbf{n}}
\newcommand{\wsymb}{\omega}
\newcommand{\uw}{u^\wsymb}
\newcommand{\Yw}{Y^\wsymb}
\newcommand{\Tw}{T^\wsymb}
\newcommand{\fw}{f^\wsymb}
\newcommand{\fz}{f}
\newcommand{\lsymb}{\ell}
\newcommand{\ul}{u^\lsymb}
\newcommand{\wl}{w^\lsymb}
\newcommand{\fl}{f^\lsymb}
\newcommand{\ones}{\mathbf{1}}
\newcommand{\SR}{\mathrm{SR}}
\newcommand{\CR}{\mathrm{CR}}
 \renewcommand{\h}{\mathsf{w}}
\newcommand{\Yobs}{Y^{\textup{obs}}}
\newcommand{\Zobs}{Z^{\textup{obs}}}
\newcommand{\Wobs}{W^{\textup{obs}}}
\newcommand{\Tobs}{T^{\textup{obs}}}
\newcommand{\obs}{\textup{obs}}
\newcommand{\Uobs}{U^{\obs}}
\newcommand{\pr}{P}
\newcommand{\tS}{\textup{S}}
\newcommand{\tSm}{\textup{Sm}}
\newcommand{\tSL}{\textup{SL}}
\newcommand{\tSLm}{\textup{SLm}}
\newcommand{\TheoremOne}{
 Let $\pr (L)$ denote a distribution of the group labels with support
  $\Ldom = \{1, \ldots, K\}^N$. 
Let $W = \wl(L)\in\Wdom^N$ be the corresponding exposures, 
and let $U = \ul(L) \in\{0,1\}^N$ be the focal indicator vector, for some $\wl(\cdot), \ul(\cdot)$ defined by 
the analyst.
  Define $\mathbb{S}_{A,U} = \SN(A)\cap \SN(U)$, which is the permutation subgroup of $\SN$ that leaves $A$ (the attribute vector) and $U$ (the focal unit vector) unchanged. Suppose that the following conditions hold.
  \begin{enumerate}[(a)]
  \item \label{thm-a} %
     $\pr(L) = \pr(\pi L)$, for all $\pi\in\mathbb{S}_{A,U}$ and $L\in\Ldom$.

  \item\label{thm-b}
 $\wl(\cdot )$ is equivariant with respect to $\mathbb{S}_{A,U}$.  

    \item\label{thm-c}
$\ul(\cdot )$ is equivariant with respect to $\mathbb{S}_{A,U}$.  

  \end{enumerate}
  Then, $W$ is uniformly distributed conditional on the 
  event $\{W \in \mathcal{B}\}$, where $\mathcal{B}\in\mathcal{O}(\Wdom^N; \SAU)$.
}
\newenvironment{manualproc}[1]{%
  \manualtheoreminner
}{\endmanualtheoreminner}
\title{\bfseries Randomization tests for peer effects in group formation experiments}
\author{Guillaume Basse, Peng Ding, Avi Feller, Panos Toulis\thanks{
We thank Alex Franks, Xinran Li, Sam Pimentel, and Fredrik S{\"a}vje as well as seminar participants at UCLA, UC Berkeley, and UCSB for helpful comments. GB acknowledges support from the U.S. National Science Foundation (grant \# 1713152). PD acknowledges support from the U.S. National Science Foundation (grants \# 1713152 and \# 1945136). AF gratefully acknowledges support from a National Academy of Education/Spencer Foundation postdoctoral fellowship. PT is grateful for the John E. Jeuck Fellowship at Booth.}}
\begin{document}

\maketitle

\pagenumbering{gobble}

\begin{abstract}
Measuring the effect of peers on individuals' outcomes is a challenging problem, in part because individuals often select peers who are similar in both observable and unobservable ways. 
Group formation experiments avoid this problem by randomly assigning individuals to groups and observing their responses; for example, do first-year students have better grades when they are randomly assigned roommates who have stronger academic backgrounds?
In this paper, we propose randomization-based permutation tests for group formation experiments, extending classical Fisher Randomization Tests to this setting. 
The proposed tests are justified by the randomization itself, require relatively few assumptions, and are exact in finite-samples. This approach can also complement existing strategies, such as linear-in-means models, by using a regression coefficient as the test statistic.
We apply the proposed tests to two recent group formation experiments.
\medskip 

\noindent {\it Keywords:} Causal inference; Conditional randomization test; Equivariance; Exact $p$-value; Non-sharp null hypothesis
\end{abstract}


\onehalfspacing
\clearpage
\pagenumbering{arabic}


\section{Introduction}

Peers influence a broad range of individual outcomes, from health to education to co-authoring papers.\footnote{All of the co-authors entered the same graduate program in the same year.}
However, studying these peer effects in practice is challenging in part because individuals typically select peers who are similar in both observed and unobserved ways \citep{sacerdote2014experimental}.
\emph{Randomized group formation}, also known as exogenous link formation, avoids this problem by randomly assigning individuals to groups and observing their responses.
Among its many applications, this approach has been used to
assess the effect of dorm-room composition on student grade point average
\citep[GPA;][]{sacerdote2001peer,bhattacharya2009inferring,li2018randomization}, the effect of squadron composition on  individual performance at military academies 
\citep{lyle2009effects,carrell2013natural}, the effect of business groups on the diffusion of 
management practices \citep{fafchamps2015networks,cai2017interfirm}, 
the effect of group or team assignments on the performance of professional athletes \citep{guryan2009peer}, and the effect of co-workers on productivity \citep{herbst2015peer, cornelissen2017peer}. 
A typical substantive question is then: what is the effect of randomly assigning an incoming first-year student to a roommate with high academic preparation (the ``exposure'') on the student's own end-of-year GPA?

In this paper, we propose analyzing randomized group formation designs from the perspective of ``randomization inference,'' in the spirit of \citet{fisher1935}. 
Like the classic Fisher Randomization Test (FRT), our ultimate proposal is a straightforward permutation test that (conditionally) permutes each individual's exposure.
This test is exact in finite-samples, requires relatively few assumptions, and is justified by the randomization itself.
Thus, we argue that our approach is a natural benchmark for analyzing randomized group formation designs, building on a growing literature within economics and econometrics \citep[see][]{lehmann2006testing,imbens2015causal,canay2017randomization, young2019channeling} that seeks 
to use the randomization itself as the source of uncertainty when analyzing randomized trials.
Moreover, we can combine this approach with popular model-based frameworks, such as the linear-in-means model \citep{manski1993identification}, by using a model to generate the test statistics for subsequent randomization tests.  When such models are correctly specified, the corresponding randomization tests are likely to have higher power. Even when the models are incorrectly specified, our proposed randomization tests can still ensure that the $p$-values are finite-sample valid.

\pagebreak

To develop this procedure, we overcome several technical and computational hurdles.
First, a key challenge for randomization tests under interference is that the null hypotheses of interest are not typically ``sharp,'' in the sense of specifying all potential outcomes for all units \citep{rosenbaum2007interference, hudgens2008toward}. For example, the null hypothesis of no difference between having 0 or 1 students with high academic preparation in a dorm room does not have any information about dorm rooms that have 2 students of that type.
An important innovation for causal inference under interference is to restrict the randomization test to a subset of units, known as \emph{focal units}, which ``makes the null hypothesis sharp'' and allows for otherwise standard conditional randomization tests \citep{aronow2012general, athey2018exact, basse2019rand}. Our first contribution is to extend these results to randomized group formation designs,  and show that restricting our attention to focal units indeed enables valid randomization-based tests, at least in principle.

In practice, however, it is difficult to obtain draws from the appropriate null distribution in group formation designs. The computationally straightforward approach of naively permuting the exposure of interest (e.g., permuting the number of students in a room of a specific type) is not typically valid, since permuted exposures can be incompatible with the original group formation design.
Conversely, the conceptually valid approach of repeatedly assigning groups can be computationally prohibitive for testing non-sharp null hypotheses that require conditioning on a specific set of focal units.

Our second main contribution is therefore to develop computationally efficient randomization tests that can be implemented easily via permutations.
In particular, for a broad class of designs, we show that permuting exposures separately for each level of individuals' own attributes (e.g., high academic preparation) leads to valid randomization tests.
Using algebraic group theory, we prove that a key property in all these designs is {\em equivariance}, which, roughly speaking, ensures that an invariance in the design translates into an invariance on peer exposure.
%
 Our paper thus provides one of the first, general theoretical results on 
efficient implementation of randomization tests of peer effects via permutations.

We apply our results to two studies based on randomized group formation designs: freshmen randomly assigned to dorms \citep{li2018randomization} and chief executive officers (CEOs) randomly assigned to group meetings \citep{cai2017interfirm}. 
We describe stylized versions of these examples in the next section and discuss the applications in more detail in Section~\ref{section:applications}.
In the appendix, we also include extensive simulation studies showing both the validity of the method and its power under a range of scenarios. 

Our approach combines two recent strands in the literature on causal inference under interference. 
In the first thread, \citet{aronow2012general}, \citet{athey2018exact}, and \citet{basse2019rand} develop conditional randomization tests that are valid under interference;
we discuss this further in Section \ref{section:rand-non-sharp}. In that setup, the groups are fixed and the intervention itself is randomized.  
In the second thread, \citet{li2018randomization} explicitly consider group formation designs and define peer effects using the potential outcomes framework.  Their paper mainly  considers the {\it Neymanian} perspective that focuses on randomization-based point and interval estimation based on normal approximations \citep{imbens2015causal, abadie2020sampling}. By contrast, our paper chiefly considers the {\it Fisherian} perspective that instead focuses on finite-sample exact $p$-values via randomization-based testing. 
This allows us to examine hypotheses for smaller subpopulations, including those in our motivating examples. 
Moreover, our approach is valid for arbitrary outcome distributions, including possibly heavy-tailed sales revenue in the second example~\citep{rosenbaum2002observational, lehmann2006testing}.


\section{Setup and framework}
\label{section:setup}

\subsection{From regression to randomization inference for peer effects}
\label{section:regression}

To illustrate the notation and the key concepts, we introduce two running examples. 
Example~\ref{ex:sacerdote} presents an idealized version of \citet{sacerdote2001peer} and \citet{li2018randomization}, in which incoming college freshmen are randomly assigned to dorm rooms.
Example~\ref{ex:cai} presents an idealized version of \citet{cai2017interfirm}, in which 
CEOs of Chinese firms are randomly assigned to attend monthly group meetings. 
Both examples have a common structure in which individuals are randomly assigned to groups. We observe attribute $A$ and outcome $Y$ for each individual, and the attributes of peer individuals in the group, $W$. The goal is to estimate the ``effect'' of $W$ on $Y$.
We make these statements more precise in the next section and analyze the original data from both examples in Section~\ref{section:applications}.

\begin{example}
\label{ex:sacerdote}
Suppose that $N$ incoming freshmen are paired into $N/2$ dorm rooms of size $2$. We classify 
	incoming freshmen as having high ($A=1$) or low ($A=0$) incoming level of academic preparation (e.g., based on standardized test scores and high school grades). 
	We want to understand whether a freshman's end-of-year GPA varies based on the academic preparation of his or her roommate ($W$). 
	Specifically, is there an effect on end-of-year GPA ($Y$) of being assigned a roommate with `high' incoming preparation ($W = 1$) relative to being assigned to a roommate with `low' incoming preparation ($W = 0$)? 
\end{example}
\begin{example}\label{ex:cai}
	Suppose that $N$ firm CEOs are 
	assigned to $N/3$ monthly meeting groups of size $3$ where they discuss business and management practices. 
	Each CEO is classified as leading a `large firm' ($A=1$) or `small firm' ($A=0$). 
	We want to assess whether the revenue of a CEO's company ($Y$) is affected by the composition of the meeting group ($W$). 
	Specifically, is there an impact on the firm's revenue of assigning that firm's CEO to a group with two CEOs from large firms ($W = 2$) relative to assigning that firm's CEO to a group with one ($W = 1$) or no CEOs ($W = 0$) from large firms? 
\end{example}

These examples capture the notion of a peer effect as the idea that a given unit's outcome may be affected by their peers' attributes. A vast literature in economics formalizes these ideas; see, among others, \citet{manski1993identification}, \citet{brock2001interactions}, \citet{sacerdote2011handbook}, \citet{goldsmith2013social}, and \citet{angrist2014perils}. We now briefly review common existing approaches and discuss recent work that motivates the use of linear regression from the randomization perspective \citep{li2018randomization}. 
Since our eventual goal is a fully randomization-based framework for analyzing randomized group formation designs, our discussion here necessarily focuses on reduced-form approaches, setting aside a vibrant literature on more structural models of peer effects and social interactions \citep[see][]{bramoulle2020peer}.

\paragraph{Linear-in-means model.}
We begin with the  workhorse \emph{linear-in-means model}, described in detail in a seminal paper from \citet{manski1993identification}, which regresses $Y$ on $\overline{A}$, the average attribute in the group. Following a long literature \citep[see][]{sacerdote2011handbook}, we initially consider the leave-one-out form of this model, which separates out $A$, a unit's own attribute, and $W$ (a transformation of) the leave-own-unit-out average attribute:
$$Y_i^{\text{obs}} = \alpha + \beta A_i + \tau W_i + \varepsilon_i,
$$
where $Y_i^{\text{obs}}$ is the observed outcome for unit $i$.
For Example 1, both $A$ and $W$ are binary; for Example 2, $A$ is binary and $W$ takes on three values, $\{0,1,2\}$. The coefficient $\tau$ is referred to as the \emph{exogenous peer effect} \citep{manski1993identification} or the \emph{social return} \citep{angrist2014perils}.
Standard errors are typically clustered at the group level.
Importantly, we do not include specifications with $Y$ on the right-hand side and therefore do not consider so-called \emph{endogenous peer effects}. While this avoids a range of thorny econometric questions \citep[see][]{manski1993identification, angrist2014perils}, this choice necessarily restricts the type of substantive questions we can address. 
Similarly, since we focus on experiments in which individuals are randomly assigned to groups, we also exclude \emph{correlated effects}, which could arise if individuals self-select into groups.

Interestingly, \citet{kolesar2011inference} and \citet{angrist2014perils} note the connection between this linear regression and a jackknife instrumental variables estimator (JIVE), with group as the instrument. Let $L$ be the group indicator, then the coefficient $\tau$ above is equivalent to the leave-own-unit-out two-stage least-squares coefficient of (informally) 
$Y$ on $A$, instrumented with $L$. In noting this connection, \citet{angrist2014perils} argues that the many weak instruments problem is partly responsible for the poor finite-sample behavior of regression estimators --- a behavior we also observe in our applications.

\paragraph{Heterogeneous treatment effect model.} Even focused exclusively on exogenous peer effects, there are many challenges with the linear-in-means model. Most immediately, as \citet{sacerdote2011handbook} notes: ``from an empirical point of view, researchers have found that peer effects are not in fact linear-in-means''. This has led researchers to instead consider interacted specifications that allow for possible nonlinearities \citep{sacerdote2001peer, duncan2005peer, cai2017interfirm}. In the context of our examples these are specifications of the form:
\begin{equation}
\label{eq::ols-interaction}
Y_i^{\text{obs}} = \alpha + \beta A_i + \tau W_i + \gamma A_i \cdot W_i + \varepsilon_i
\end{equation}
Here the relevant effects are appropriate combinations of the coefficients $\tau$ and $\gamma$, and, as above, the standard errors are typically clustered at the group level. 
Again, this interacted model is typically motivated by the desire to estimate a more flexible specification for the (sometimes implicit) underlying model of social interactions.

\paragraph{Motivating regression from randomization.} Somewhat surprisingly, \citet{li2018randomization} show that, for a broad class of randomized group formation designs, randomization fully justifies the interacted specification \eqref{eq::ols-interaction} above. Moreover, \citet{li2018randomization} argue that the randomization-based perspective justifies the use of \emph{non-clustered} robust standard errors, suggesting that the common practice of clustering standard errors is overly conservative for such designs, analogous to arguments from \citet{abadie2023cluster}.
In this case, failing to include the interaction (i.e., simply running the regression of $Y$ on $A$ and $W$) leads to a precision-weighted average of the subgroup effects, though this approach is no longer equivalent to a randomization-based estimator.

\paragraph{From regression to randomization-based testing.}
As we show below, the regression-based approach from \citet{li2018randomization}, while conceptually elegant, can have poor finite-sample performance.
In particular, the asymptotic theory in that paper assumes that both $A$ and $W$ have very few levels, and that the number of individuals within each $A \times W$ group is large. This is not a reasonable approximation in our applications, however; 
for instance, in the roommates application we analyze in Section \ref{section:applications}, the size of an $A \times W$ subgroup can be as small as four students.

Our main contribution is to justify and implement randomization-based tests for exogenous peer effects, building on recent proposals for randomization tests under interference \citep{aronow2012general, athey2018exact, basse2019rand, puelz2022graph}.
At a high level, we propose the permutation-based analog of the fully interacted regression model discussed above. The primary technical obstacle is justifying this approach from the randomized group formation design itself. As we will see, this requires substantial technical overhead, even if the final procedure is itself straightforward. To demonstrate this, we also develop theory for general randomization-based tests for non-sharp nulls.

\subsection{Notation and setup}
\label{section:preliminaries}
We now formalize the problem setup outlined above.
Consider $N$ units to be assigned to $K$ different groups; both numbers are fixed. Let $\Ib = \{1, \ldots, N\}$ denote the set of units.
Let $L_i \in \Ldom = \{1, \ldots, K\}$ denote the labeled group to which unit $i$ is assigned, and define $L = (L_i)_{i=1}^N$ as the full group-label assignment vector. Also, let $\pr(L)\in[0,1]$ denote the 
probability distribution of $L$, which is known from the experimental design. 
In a group formation design, the individual $i$'s treatment assignment can be defined as
\begin{align}\label{eq:ZL}
	Z_i & =  \big\{j \in \Ib  : j\neq i \text{ and } L_j = L_i \big\}.
\end{align}
Assignment $Z_i$ is therefore the set of individuals assigned to the same  group as individual $i$. Let $Z = (Z_i)_{i=1}^N$ be the full assignment vector.

As we discuss above, a key feature of our setting is that each individual $i$ exhibits a salient \emph{attribute}, $A_i$; for example, $A_i = 1$ if individual $i$ has high academic preparation entering college. 
This attribute often plays a special role in group formation designs; for example, in the \emph{stratified group formation design} we consider in Section \ref{sec:sr_designs}, a room must have a fixed, pre-defined number of students with $A_i = 1$.
Formally, attribute $A_i$ takes values in a set $\Adom$, which could be a transformation (e.g., coarsened version) of covariates $X_i$.~We let $A = (A_i)_{i=1}^N$ and $X= (X_i)_{i=1}^N$ be the full vector of attributes and matrix of covariates,  respectively. 

The goal of this paper is to understand how peers' attributes affect unit outcomes, and so we define the {\em exposure} for each unit $i$ as: 
\begin{equation}\label{eq:exposure}
	W_i = w_i(Z) = \{A_j: j \in Z_i\},
\end{equation}
that is, the exposure of unit $i$ is the multiset of attributes of its neighbors, where a multiset is a set with possibly repeated values. Define 
$W = w(Z) = (w_i(Z))_{i=1}^N$ as the full vector of exposures, and denote by 
$\Wdom = \{  \h_1, \ldots, \h_m  \}$ the finite set of possible exposure values in the experiment. 
Finally, we let $Y_i(Z)$ denote the real-valued potential outcome of unit $i$ under assignment $Z$.

While this formulation is general, it is often useful to define exposures as simple functions of the attribute vector $A$. 
For example, when $A$ is binary, a natural choice is to define
\begin{equation}\label{eq:exposure-2}
W_i = 	w_i(Z) = \sum_{j\in Z_i} A_j,
\end{equation}
the number of ``neighbors" of unit $i$ with attribute $A = 1$. 
All results in the paper hold for general exposure mappings as in \eqref{eq:exposure}; we use the simpler formulation in \eqref{eq:exposure-2} in the running examples for simplicity.

\paragraph{Notation.}~These definitions are nested, so that $L$ determines $Z$, and $Z$  determines $W$.
As such, any function on one domain is also a function on a `finer' domain.
To ease notation, we will use `$\fz(Z)$' to denote a  function defined on the domain of $Z$ that is implied by $\fw(W)$, and, similarly, use `$\fl(L)$' to denote 
the function on the domain of $L$ that is implied by either $\fw(W)$ or $f(Z)$, noting that these all map to the same value:
$
\fw(W) = \fz(Z) = \fl(L).
$
For instance,  we write $W = \wl(L)$ to express the exposures in~\eqref{eq:exposure} as a function of $L$.

\subsection{Assumptions and exclusion restrictions}
\label{sec:assumptions}
The primary goal of our analysis is to estimate the causal effect of exposing a unit to a mix of peers with one set of attributes versus another, known as the \emph{exogenous peer effect} \citep{manski1993identification} or the \emph{social return} \citep{angrist2014perils}.
Formalizing such effects is non-trivial, however, with a substantial literature defining estimands in terms of coefficients in a linear model. 
Following a more recent set of papers, we instead formalize these effects via exposure mappings based on potential outcomes \citep{toulis2013estimation, manski2013identification, aronow2017estimating, li2018randomization}, which capture the summary of $Z$ that is sufficient to define potential outcomes on the unit level.

To do so, we make the critical assumption that the exposure is \emph{properly specified} in the sense defined below~\citep{aronow2017estimating}:

\begin{assumption}\label{asst:properly-specified}
	For all $i\in \Ib$ and for all $Z,Z'$, we have 
	\begin{equation*}
 w_i(Z) = w_i(Z') \Rightarrow Y_i(Z) = Y_i(Z').
	\end{equation*}
\end{assumption}

Under Assumption~\ref{asst:properly-specified}, each unit $i$ has $|\Wdom| = m$ potential outcomes, one for 
each level of exposure, and we may write
$$
Y_i(Z) = \Yw_i(w_i(Z)) = \Yw_i(W_i)
$$
to indicate that potential outcomes depend only on the exposure level and not the particular group assignment.

\addtocounter{example}{-2}
\begin{example}[continued]
	With dorm rooms of size $2$, the exposure $W_i$ of student $i$ is then the attribute $A_j$ of student $i$'s roommate. More generally, under 
	the exposure mapping in \eqref{eq:exposure-2}, each unit has only two possible exposures, 
	since $W_i \in \Hset = \{0,1\}$, and thus each unit has two potential outcomes $\{  \Yw_i(0), \Yw_i(1) \} $. 
\end{example}

\begin{example}[continued]
	Here, each group has size $3$ and the assignment $Z_i$ of unit $i$ is the unordered pair of indices of the other two CEOs in the group. CEO $i$'s exposure is then the number of the other CEOs from large firms.
	In this 
	case, each unit has three possible exposures, since $W_i \in \Hset = \{0, 1, 2\}$ under \eqref{eq:exposure-2}, and thus each unit has 
	three potential outcomes $\{ \Yw_i(0), \Yw_i(1), \Yw_i(2) \} $. 
\end{example}

\paragraph{Discussion of Assumption \ref{asst:properly-specified}.}
Assumption \ref{asst:properly-specified}, which is \emph{not} justified by the randomization, is the key substantive assumption in our setup and merits further discussion. At its core, this assumption is an \emph{exclusion restriction}: the only impact of the randomization on an individual's outcome is by changing the salient attributes $A$ --- and only the salient attributes --- of the other individuals in the group. 
For instance in 
Example~\ref{ex:sacerdote}, Assumption \ref{asst:properly-specified} implies that room assignment affects unit $i$'s freshman GPA only by changing $i$'s roommate's academic ability, excluding other possible channels of peer influence. This necessarily reduces otherwise complex individual and social interactions to a scalar quantity; for discussion, see \citet{sacerdote2011handbook}.\footnote{Similar challenges arise in other econometric applications, such as `judge fixed effects', where the choice of attribute (e.g., conviction rate) is important in the overall analysis \citep[e.g.,][]{frandsen2023judging}.} 
Assumption \ref{asst:properly-specified} also plays a role analogous to the stable unit treatment value assumption (SUTVA) by ruling out effects from changing \emph{other} groups.
Thus, when combined with the exposure mapping of \eqref{eq:exposure}, 
this assumption implies both a form of \emph{partial interference} 
and a form of \emph{stratified interference} \citep{hudgens2008toward}. 
Finally, beyond assuming that attribute $A$ is the relevant scalar quantity, Assumption \ref{asst:properly-specified} also assumes that the functional form is correctly specified, though we typically allow $W$ to be fully flexible with respect to $A$.

As we discuss in Appendix \ref{section:assumptions-hypothesis}, the procedure we outline below will still lead to a valid test without imposing Assumption \ref{asst:properly-specified} --- though interpreting that rejection is challenging. In particular, the test might reject if the null hypothesis is indeed correct but Assumption \ref{asst:properly-specified} does not hold, for instance if an individual's outcome depends on attributes other than $A$. 
At present, there is little guidance for applied researchers on specifying exposure
mappings, in part because these mappings can be highly context-dependent. 
For point estimation, violating Assumption \ref{asst:properly-specified} complicates the implied estimand, which will typically correspond to a particular weighted average of treatment effects.
See \citet[Section 7]{li2018randomization} for a discussion in the context of peer effects; \citet{savje2021causal} for a more general discussion
of inference with misspecified exposure mappings; and \citet{leung2022causal} for an alternative approach that considers approximate exposures.
The situation is more complicated for testing, where it is difficult to interpret a rejection in the absence of Assumption \ref{asst:properly-specified}. This remains an open research area.  As one possible direction forward, see recent work from \cite{hoshino2023randomization}, who propose randomization-based specification tests for exposures.


\subsection{Sharp and non-sharp null hypotheses}
\label{section:null-hypotheses}

Following the literature on FRTs, we focus on hypotheses defined at the unit level, unlike the regression-based approaches in Section \ref{section:regression}, which focus on so-called \emph{weak null} hypotheses that average over units.
A key technical challenge is that many unit-level null hypotheses of interest are \emph{non-sharp}; a primary goal in this paper is to develop procedures that are both theoretically valid (Section~\ref{section:valid}) and 
computationally tractable (Section~\ref{sec:perm_designs}) for such hypotheses.

To illustrate the distinction between sharp and non-sharp null hypotheses, first
let $\Zobs$, $W^{\obs} = w(\Zobs)$ and 
$\Yobs = Y(Z^{\obs})$ be, respectively, the observed assignment, exposure, and outcome vectors. 
We say a null hypothesis is \emph{sharp} if, given the null and the observed data, the potential outcomes $\Yw_i(W_i)$ are imputable for all possible exposures $W_i \in \Hset$, for all units $i \in \Ib$. 

First, consider the global null hypothesis:
\begin{equation}\label{eq:type-0}
	H_0: \Yw_i(\h_1) = \Yw_i(\h_2) = \cdots = \Yw_i(\h_m) \text{ for all } i \in \Ib .
\end{equation}
The null hypothesis in \eqref{eq:type-0} is sharp.
As we show in Section~\ref{section:rand-sharp}, we can test this hypothesis using a standard FRT; \citet[][Section 7.1]{li2018randomization} briefly consider this approach as well.
This global sharp null is analogous to the omnibus null hypothesis in a classical analysis of variance \citep{ding2018randomization} and is a useful starting point for analyses: if there is no evidence of any effect at all, then further analyses are likely less interesting. See \citet[Ch. 15]{lehmann2006testing}. 

At the same time, many substantively interesting causal hypotheses for peer effects are not sharp. 
One important example is the pairwise null hypothesis of the type:
\begin{equation}\label{eq:type-1}
	H_0^{\h_1,\h_2}: \Yw_i(\h_1) = \Yw_i(\h_2) \text{ for all } i \in \Ib,
\end{equation}
where $\h_1,\h_2 \in \Hset$. 
To illustrate, Example \ref{ex:cai} has three possible exposures $\Hset = \{0,1,2\}$, and the sharp null hypothesis of \eqref{eq:type-0} can be written as:
$
		H_0: \Yw_i(0) = \Yw_i(1) = \Yw_i(2) \text{ for all } i\in \Ib.
$
This contains strictly more information about the missing potential outcomes than a pairwise null hypothesis \eqref{eq:type-1}, such as
	$
		H_0^{1,2}: \Yw_i(1) = \Yw_i(2) \text{ for all } i\in \Ib.
$	
Substantively, the global sharp null hypothesis assumes that changing the number of peer CEOs from large firms has no effect whatsoever on a firm's revenue. By contrast, the pairwise non-sharp null hypothesis instead imposes that there is no impact on firm revenue of having one versus two peer CEOs from large firms, without imposing any restrictions on revenue in the absence of any peer CEOs from large firms.
Thus, the ability to test pairwise 
null hypotheses is critical for learning more from the experiment than the initial conclusion that the experiment indeed had some effect somewhere.

Finally, we are often interested in null hypotheses for the subset of units with a given attribute $A_i = a$. As we discuss in our applications below, we often believe that the exposure 
will have differential effects depending on an individual's own attribute. 
Specifically, we can modify both \eqref{eq:type-0} and~\eqref{eq:type-1} to only consider units with $A_i = a$:
\begin{equation}\label{eq:type-00}
	H_0(a): \Yw_i(\h_1) = \Yw_i(\h_2) = \cdots = \Yw_i(\h_m)
 \text{ for all }i\in \Ib \text{ such that } A_i = a
\end{equation}
and
\begin{equation}\label{eq:type-2}
	H_0^{\h_1,\h_2}(a): \Yw_i(\h_1) = \Yw_i(\h_2)
 \text{ for all } i\in \Ib \text{ such that } A_i = a.
\end{equation}
The results below immediately carry over to these subgroup null hypotheses by conditioning on the set of units 
with $A_i = a$. We therefore focus on the simpler null hypotheses of~\eqref{eq:type-0} and 
\eqref{eq:type-1}, returning to subgroup null hypotheses in Section~\ref{section:applications}.

We note that this framework does not require formally specifying an alternative hypothesis; see \citet{athey2018exact} for a discussion in the context of randomization tests under network interference.
In our applications, the choice of the test statistic is motivated by having power against two-sided alternative hypotheses on coefficients from a linear regression model, such as the coefficient on $W$ in the regression of $Y$ on $A$ and $W$.

\subsection{Focal units}
\label{sec:focal_units}

An important technical device for randomization tests under interference is restricting the test to a subset of units known as \emph{focal units} \citep{aronow2012general, athey2018exact, basse2019rand, puelz2022graph}. The intuition behind the approach is that although $H_0^{\h_1,\h_2}$ is not sharp, 
we can ``make the null hypothesis sharp'' by conditioning on a set of focal units 
that are informative about the treatment exposures of interest.

In the context of group formation experiments, we use a binary variable $U_i$ to indicate whether unit $i$ 
is selected as a focal unit; e.g., 
to test $H_0^{\h_1,\h_2}$ we can define $U$ as follows:
\begin{equation}\label{eq:focals-1}
U = u(Z) = (U_1, \ldots, U_N) \in\{0,1\}^N,~\text{with}~U_i=1~\text{if and only if}~w_i(Z)  \in\{ \h_1, \h_2\}.
\end{equation}
That is, we select as focal units the set of units that receive either exposure $\h_1$ or exposure $\h_2$ under assignment $Z$. 
The realized set of focal units, $U^\obs = u(Z^{\obs})$, therefore denotes the set of all units with \emph{observed} exposure $\h_1$ or $\h_2$, the null exposures of interest.
To illustrate, for testing the pairwise null hypothesis $H_0^{1,2}$ in Example \ref{ex:cai}, the focal units are all CEOs who have $W_i^{\obs} = 1$ or $W_i^{\obs} = 2$ peer CEOs from large firms. 
So long as we restrict testing to this subset of units --- and under some restrictions on the possible assignment vectors --- the null hypothesis $H_0^{\h_1,\h_2}$ behaves like a sharp null hypothesis.
\citet{basse2019rand} build on this intuition and develop a valid conditional testing procedure. We adapt this to the group formation design setting in Section \ref{section:rand-non-sharp} below.

\subsection{Toy example and sketch of key ideas}
\label{section:challenges}

Before turning to the theoretical results, we first illustrate the key challenges through a toy example, shown in Figure~\ref{fig:example}.
For this example, individuals possess a binary attribute, represented by squares ($A_i=1$) and circles ($A_i=0$), and are assigned to one of three dorm rooms, one with size 3 (Room I, a ``triple'') and two with size 2 (Rooms II and III, ``doubles''), shown as large rectangles.\footnote{The sizes of the rooms themselves are not central here, and merely restrict the set of possible exposures. We also mean no disrespect to any of our former roommates, several of whom could be described as ``squares.''} 
Rooms are assigned via a \emph{completely randomized group formation design} (see Section \ref{section:crd}), which means that the sizes of the three rooms are fixed, but that the number of square roommates in each room can vary.
Here the exposure mapping is the number of roommates with $A_j = 1$ as defined in \eqref{eq:exposure-2}, so that $\Hset = \{0, 1, 2\}$.
Figure \ref{fig:example} shows the realized assignment $Z^{\obs}$ and induced exposure $W^{\obs}$. 

In this toy example, we are interested in testing two null hypotheses. 
First, the global sharp null hypothesis is that individuals' outcomes are the same regardless of the number of ``square'' roommates. Written in terms of unit-level outcomes, this is $H_0: \Yw_i(0) = \Yw_i(1) = \Yw_i(2)$ for all $i \in \Ib$. 
Second, a non-sharp, pairwise null hypothesis is whether there is an effect of having zero versus one ``square'' roommate, $H_0^{0,1}: \Yw_i(0) = \Yw_i(1)$ for all $i \in \Ib$.

\begin{figure}[t!]
\centering
\includegraphics[scale=0.46]{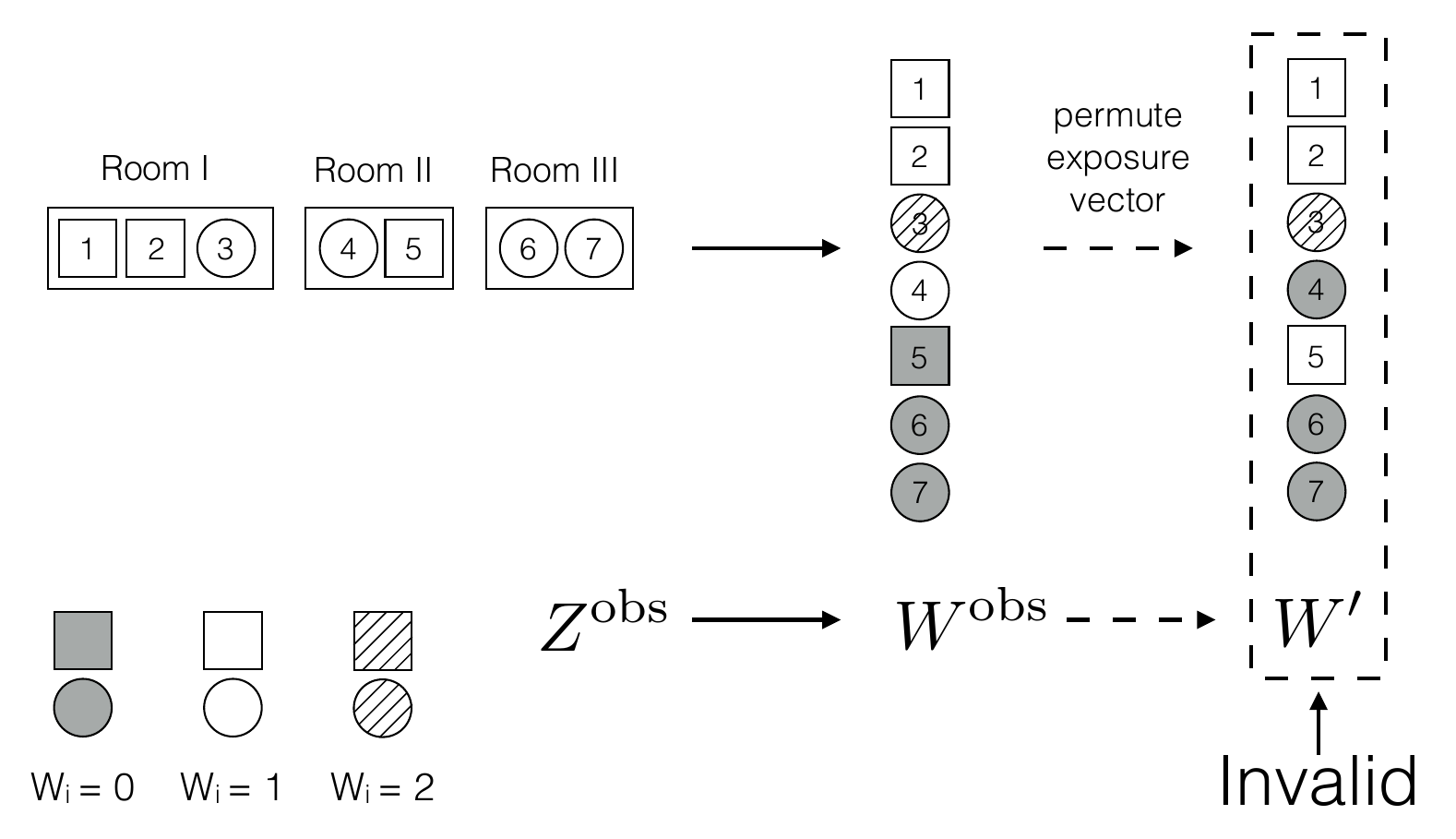}
\caption{Example of a group formation design.
Squares represent units with attribute $A_i=1$ and circles units with attribute $A_i=0$. Units with 
exposure $W_i=0$ (i.e., zero ``square" roommates) are shaded grey; units with exposure $W_i=1$ have no color; and units with exposure $W_i = 2$ (only unit 3) have a patterned background. 
}
\label{fig:example}
\end{figure}

\paragraph{Naive permutation tests can fail.} 
One seemingly natural starting place for testing the global sharp null $H_0$ is a permutation test based on permuting the assignment vector $W^{\obs}$. The right-hand column in Figure~\ref{fig:example} shows one possible permutation $W'$, which switches the exposures of units 4 and 5. This permutation, however, is incompatible with the group formation design; that is, there are no assignments $Z'$ such that, $w(Z') = W'$.\footnote{To see this, note that under $W'$ units 1, 2, and 5 would all need to have exactly one ``square'' roommate. But this isn't possible given the room configuration and group sizes.} 
Thus, naively permuting the exposure vector does not lead to a valid test here.

For the non-sharp null hypothesis, $H_0^{0,1}$, following our setup in Section \ref{sec:focal_units} above, the focal units are the set of units with observed exposure $W_i^{\obs} = 0$ or $W_i^{\obs} = 1$; unit 3, with $W_i^{\obs} = 2$, is the only unit excluded from the focal set. In the context of other interference settings \citep[e.g.,][]{athey2018exact, basse2019rand}, restricting to the focal units would be enough to ensure validity. 
Unfortunately, that is not enough here and naively permuting the exposures again fails --- the permutation $W'$ shown in Figure \ref{fig:example} is, in fact, restricted to focal units but is nonetheless inconsistent with the randomized group formation design.

\paragraph{Randomization tests based on draws from the assignment distribution are valid but computationally prohibitive.} Where permuting $W^{\obs}$ can fail, we can instead re-draw room assignments directly, $Z' \sim \pr(Z')$, and compute the induced exposures for each assignment, $W' = w(Z')$. This leads to valid, direct randomization-based tests for the \emph{sharp} global null hypothesis, $H_0$, though these are not themselves permutation tests. 

However, extending this to non-sharp null hypotheses like $H_0^{0,1}$ is challenging since we must preserve the set of focal units that we condition on. In particular, we must draw from the conditional distribution of room assignments that {\em preserves} the set of focal units; that is, all room assignments such that unit 3 always has observed exposure $W_i^{\obs} = 2$. While we could enumerate all such room assignments in this toy example, this is infeasible in general.

\paragraph{Permutation tests stratified by attribute are valid and tractable for both sharp and non-sharp null hypotheses.} Somewhat remarkably, we can generate valid, computationally tractable randomization tests for both sharp and non-sharp null hypotheses by simply stratifying the permutations based on attribute $A$. 
For the global sharp null, $H_0$, this is among all 7 units; for the pairwise non-sharp null, $H_0^{1,2}$, this is among the 6 focal units. 
In Figure \ref{fig:example}, this is the set of permutations that separately permute the exposures for circles and squares; $W'$ failed because it swapped the exposures of units with different attributes. 

While the final procedure is straightforward, to show that this restricted permutation procedure is valid we must first develop appropriate notions of symmetry and generalize existing group-theoretic results for permutation tests in randomized trials. We turn to this next.

\section{Valid tests  in arbitrary group formation designs}
\label{section:valid}

In this section, we introduce conceptually general --- albeit possibly infeasible --- procedures for constructing valid  tests for sharp and non-sharp null hypotheses for arbitrary group formation designs. 
For sharp null hypotheses, the procedure is a straightforward application of the standard  FRT to our setting. 
For non-sharp null hypotheses, however, the procedure requires greater care to ensure validity.
We turn to constructing feasible randomization tests in the next section.


\subsection{Randomization test for the sharp null}
\label{section:rand-sharp}

We start with a brief review of the classical FRT for sharp null hypotheses~\citep{fisher1935, lehmann2006testing, imbens2015causal}, as a stepping stone to the 
more challenging non-sharp null hypotheses discussed in Section~\ref{section:rand-non-sharp}. 
Consider a test statistic $T(z; Y)$ as a function of the observed treatment and outcome vectors; any choice will lead to a valid test, but certain statistics will lead to more power. 
One reasonable choice, for example, would be the coefficient of $W$ in the regression of $Y$ on $(W, A)$ and other covariates; see also Section~\ref{section:cai} for an applied example.
We can test the sharp null hypothesis $H_0$ with Procedure~\ref{proc:sharp} below.
\begin{procedure}\label{proc:sharp}
Consider observed assignment $\Zobs \sim \pr (\Zobs)$.
\begin{enumerate}
	\item Observe outcomes, $\Yobs = Y(\Zobs)$.
	\item Compute test statistic $T^{\obs} = T(\Zobs; \Yobs)$.
	\item For $Z' \sim \pr (Z')$, let $T' = T(Z'; \Yobs)$ and define
	$ 
		\mathrm{pval}(\Zobs) = \pr (T' \geq T^{\obs}) , 
$
	where $T^{\obs}$ is fixed and the randomization distribution is with respect to $\pr (Z')$.
	\end{enumerate}
\end{procedure}
This procedure is computationally straightforward if the analyst has access to the assignment mechanism $\pr (Z)$, which is necessary for Step~3. 

\begin{proposition}\label{th:sharp}
	The p-value obtained in Procedure~\ref{proc:sharp} is valid, in the sense that if $H_0$ is true, then
$
		\pr \{\mathrm{pval}(\Zobs) \leq \alpha\} \leq \alpha
$
for any 	$\alpha \in [0,1]$. 
\end{proposition}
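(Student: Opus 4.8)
The plan is to verify that Procedure~\ref{proc:sharp} is an instance of the classical Fisher Randomization Test, so that validity follows from the standard argument. The key observation is that under the sharp null $H_0$ in \eqref{eq:type-0}, every potential outcome $Y_i(\h_j)$ equals the common value $Y_i^{\obs} := Y_i(\H_i^{\obs})$, so the entire potential-outcomes schedule is imputable from the observed data. In particular, for \emph{any} assignment $z$ in the support of $\pr(Z)$, the outcome vector $Y(z)$ is completely determined and equals $\Yobs$ as a function; hence the quantity $T(Z'; \Yobs)$ computed in Step~3 is exactly $T(Z'; Y(Z'))$, the test statistic we would have observed had $Z'$ been realized. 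This is the step where the sharpness of $H_0$ does all the work, and it is the only place the null enters.

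Given this, I would proceed as follows. First, fix the observed outcome vector and define the function $g(z) = T(z; \Yobs)$ on the support of $\pr(Z)$; under $H_0$, $g(\Zobs) = T^{\obs}$ and $g(Z') = T(Z'; Y(Z'))$. Second, note that $\Zobs$ and $Z'$ are drawn from the same distribution $\pr(Z)$ (with $Z'$ independent of $\Zobs$ conditional on $\Yobs$, but this independence is not even needed — only equality in distribution matters for the marginal validity claim). Third, observe that $\mathrm{pval}(\Zobs) = \pr(g(Z') \ge g(\Zobs) \mid \Zobs) = F(g(\Zobs))$, where $F(t) = \pr(g(Z') \ge t)$ is the upper-tail function of the random variable $g(Z')$; since $g(Z')$ and $g(\Zobs)$ have the same distribution, $\mathrm{pval}(\Zobs)$ is a right-tail probability of a random variable evaluated at an independent copy of itself. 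Fourth, invoke the elementary fact that for any real-valued random variable $U$, the random variable $\pr(U' \ge U \mid U)$ with $U'$ an independent copy of $U$ is stochastically larger than or equal to $\mathrm{Unif}[0,1]$, i.e.\ $\pr\{\pr(U' \ge U \mid U) \le \alpha\} \le \alpha$; the inequality (rather than equality) accommodates ties and atoms in the distribution of $g(Z')$.

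The main obstacle — really the only non-bookkeeping point — is making the reduction in the first step airtight: one must check that $H_0$ renders $T(Z'; \Yobs)$ equal to the test statistic evaluated at the true potential outcomes under $Z'$, which requires that $T$ depends on the outcomes only through the coordinates $Y_i$ that are actually realized under $Z'$, and that under $H_0$ those coordinates coincide with $\Yobs_i$. Since $H_0$ asserts $Y_i(\h_1) = \cdots = Y_i(\h_m)$ for all $i$, this is immediate, but it is worth stating explicitly, as it is precisely the property that fails for the non-sharp nulls treated in Section~\ref{section:rand-non-sharp}. The remaining steps are the standard FRT validity argument and involve no real difficulty; I would present them compactly, citing the textbook fact about tail probabilities of exchangeable pairs rather than re-deriving it.
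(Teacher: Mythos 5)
Your proposal is correct and follows essentially the argument the paper intends: the paper states Proposition~1 as the classical Fisher Randomization Test result without spelling out a proof (its proof of the analogous Proposition~2 likewise reduces matters to imputability of the test statistic and then cites the standard validity theorem), and your two ingredients --- imputability of $T(Z';\Yobs)=T(Z';Y(Z'))$ under the sharp null, so that $\mathrm{pval}(\Zobs)$ is the upper-tail function of $g(Z')=T(Z';\Yobs)$ evaluated at the identically distributed $g(\Zobs)$, plus the super-uniformity of such tail probabilities (with the inequality absorbing ties) --- are exactly that standard argument made explicit. Your remark that only equality in distribution of $g(\Zobs)$ and $g(Z')$ is needed, which in turn rests on $\Yobs$ being a fixed vector under the sharp null, is the right point to emphasize and is precisely what fails for the non-sharp nulls handled later by conditioning.
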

	In general, it is difficult to compute $\mathrm{pval}(\Zobs)$ exactly, and we must rely on Monte Carlo approximation. This can be done by replacing the third step above by:
	\begin{enumerate}
	\setcounter{enumi}{2}
	\item For $r = 1, \ldots R$, draw $Z^{(r)} \sim \pr (Z^{(r)})$ and compute $T^{(r)} = T(Z^{(r)}; \Yobs)$. Then compute the approximation 
	$
\mathrm{pval}(\Zobs) \approx R^{-1} \sum_{r=1}^R \iv(T^{(r)} \geq T^{\obs}) .
$
	\end{enumerate}

  In practice, the test statistic $T$ used in Procedure~\ref{proc:sharp} is
  chosen to depend on $Z$ only through the exposures $W = w(Z)$. 
 Following our convention in Section~\ref{section:preliminaries}, we can re-write this test statistic as $T(Z; \Yobs) = \Tw(W; \Yobs)$.
  Procedure~\ref{proc:sharp} can then be reformulated as:
  \begin{manualproc}{1b}[special case]\label{proc:sharp-W}
    Consider observed assignment $\Zobs \sim \pr (\Zobs)$.
\begin{enumerate}
\item Observe outcomes, $\Yobs = \Yw(\Wobs)$. 
\item Compute test statistic $T^{\obs} = \Tw(\Wobs; \Yobs)$.
\item For $W' \sim \pr (W')$, let $T' = \Tw(W'; \Yobs)$ and define
$ 
    \mathrm{pval}(\Zobs) = \pr (T' \geq T^{\obs}) , 
$ 
  where $T^{\obs}$ is fixed and the randomization distribution is with respect to $\pr (W')$.
\end{enumerate}
\end{manualproc}

The distribution $\pr(W')$ used above is directly induced by $\pr(Z')$, as
$\pr(W') = \pr\{ w(Z') \} $, and the validity of Procedure~\ref{proc:sharp-W} follows from that
of Procedure~\ref{proc:sharp}, as established by Proposition~\ref{th:sharp}.


\subsection{Randomization tests for non-sharp nulls}
\label{section:rand-non-sharp}

We now turn to the more challenging problem of testing non-sharp pairwise hypotheses such as $H_0^{\h_1,\h_2}$. In general, Procedure~\ref{proc:sharp} can only be valid if the test statistic is imputable 
under $H_0$~\citep{basse2019rand}; that is, $T(Z; Y(Z)) = T(Z; Y^{\obs})$ under $H_0$, for 
all $Z$ for which $ \pr (Z) > 0$.  This property holds because $H_0$ is sharp, which implies that 
$Y(Z) = Y^{\obs}$ under $H_0$. In contrast, pairwise null hypotheses like $H_0^{\h_1,\h_2}$ are not 
sharp, and the FRT methodology does not apply directly.

As we discuss in Section \ref{sec:focal_units} above, we can ``make the null hypothesis sharp'' by restricting the test to the set of focal units, $U^\obs = u(Z^{\obs})$ \citep{aronow2012general, athey2018exact, basse2019rand}. In particular, we use \citet{basse2019rand}'s formulation of conditional tests that guarantee that the resulting test statistics are imputable.
Applying this approach to the peer effects setting requires two changes to Procedure~\ref{proc:sharp}.
First, we need to resample assignments (Step 3 of Procedure~\ref{proc:sharp}) with 
respect to the conditional distribution of treatment assignment,
\begin{equation}\label{eq:cond_mech}
\pr \{ Z' \mid u(Z') = U^{\obs}\} \propto \iv\{u(Z') = U^{\obs}\}\pr (Z') ,
\end{equation}
rather than with respect to the unconditional distribution. In the terminology of \citet{basse2019rand}, 
$U^{\obs}$ is the conditioning event of the test, and its (degenerate) conditional distribution $\pr(U \mid Z) = \iv\{ u(Z) = U \} $ 
is the conditioning mechanism. 

Second, to ensure that the potential outcomes used by the test are imputable, we need to restrict the test statistic to the 
units in the focal set; we denote this new test statistic as $T(z; Y, U)$. For simplicity, we use the restricted difference in means between focal units who are exposed to $\h_1$ and those who are exposed to $\h_2$: 
\begin{equation}\label{eq:test-stat}
	T(z; Y, U) =  \frac{\sum_{i=1}^N \iv\{U_i=1, w_i(z) = \h_2\} Y_i }{\sum_{i=1}^N \iv\{U_i=1, w_i(z) = \h_2\}}
 - \frac{\sum_{i=1}^N \iv\{U_i=1, w_i(z) = \h_1\} Y_i }{\sum_{i=1}^N \iv\{U_i=1, w_i(z) = \h_1\}}.
\end{equation}
The following procedure leads to a valid test of the pairwise non-sharp hypothesis $H_0^{\h_1,\h_2}$.
\begin{procedure} \label{proc:general} Consider observed assignment $\Zobs \sim \pr (\Zobs)$. 
	\begin{enumerate}
		\item Observe outcomes, $\Yobs = Y(\Zobs)$. 
		\item Let $U^{\obs} = u(\Zobs)$ and compute $\Tobs = T(\Zobs; \Yobs, U^{\obs})$ 
  from~\eqref{eq:test-stat}.
		\item For $Z' \sim \pr (Z' \mid U^{\obs})$, let $T' = T(Z'; \Yobs, U^{\obs})$ and define the p-value as 
		$ 
		\mathrm{pval}(\Zobs) = \pr (T' \geq \Tobs \mid U^{\obs}),
		$ 
		where $\Tobs$ is fixed and the randomization distribution is with respect to $\pr (Z' \mid U^{\obs})$ as defined in~\eqref{eq:cond_mech}
	\end{enumerate}
\end{procedure}
As in Section~\ref{section:rand-sharp}, we generally consider test statistics
that depend on $Z$ only through the exposure vector $W = w(Z)$. In addition, notice that
the focal indicator $U = u(Z)$ in \eqref{eq:focals-1} also depends on $Z$
only through $W$.
Following our convention in Section~\ref{section:preliminaries}, this allows us to redefine the focal 
indicator as $U = u(Z) = \uw(W)$, and rewrite Procedure~\ref{proc:general} as follows:

\begin{manualproc}{2b}[special case]\label{proc:general-W}
Consider observed assignment $\Zobs \sim \pr (\Zobs)$. 
\begin{enumerate}
\item Observe outcomes, $\Yobs = \Yw(\Wobs)$. 
\item Compute $U^\obs = \uw(\Wobs)$ and  $\Tobs = \Tw(\Wobs; \Yobs, U^{\obs})$.
\item For $W' \sim \pr (W' \mid U^{\obs})$, let $T' = \Tw(W'; \Yobs, U^{\obs})$ and define the p-value as
$ 
    \mathrm{pval}(\Zobs) = \pr (T' \geq \Tobs),
$ 
  where $\Tobs$ is fixed and the randomization distribution is with respect to
  $\pr (W' \mid U^{\obs})$. Note again that the distribution $\pr(W' \mid U^{\obs})$ is induced by that of
$\pr(Z' \mid u(Z')=U^{\obs})$.
\end{enumerate}
\end{manualproc}

\begin{proposition}\label{th:general}
  Procedure~\ref{proc:general} and its special case,
  Procedure~\ref{proc:general-W}, lead to valid p-values conditionally and
  marginally for $H_0^{\h_1,\h_2}$. That is, if $H_0^{\h_1,\h_2}$ is true then $
\pr \{ \mathrm{pval}(\Zobs) \leq \alpha  \mid U^{\obs}\} \leq \alpha $
  for any $U^{\obs}$  and any $\alpha \in [0,1]$,
and thus
$\pr \{ \mathrm{pval}(\Zobs) \leq \alpha   \} \leq \alpha$ as well.
\end{proposition}

The proof for Proposition~\ref{th:general} is a direct application of Theorem 1 of~\citet{basse2019rand}. For the rest of this paper, we only consider test statistics that depend on $Z$ through $W = w(Z)$ alone. Therefore, all the statements in subsequent  sections will be made in terms of Procedures~\ref{proc:sharp-W}
  and~\ref{proc:general-W} instead of Procedures~\ref{proc:sharp} and~\ref{proc:general}. 

The conditional randomization tests described in this section
differ from standard conditional tests in several important ways. First, the
goal of standard conditional tests is typically to make the test more powerful \citep{lehmann2006testing, hennessy2016conditional}, rather than to
ensure validity. The conditioning in Procedures~\ref{proc:general}  and~\ref{proc:general-W}, by contrast, is necessary to ensure that the
test is valid.
Second, the procedure depends strongly on the non-sharp null hypothesis being
tested. 
Indeed, conditional randomization tests can only test some non-sharp null hypotheses, such as $H_0^{\h_1,\h_2}$, which typically dictate the conditioning mechanism.

\paragraph{Computational challenges with testing non-sharp nulls.}

The key challenge for testing non-sharp null hypotheses is that the procedures outlined above are computationally intractable in realistic 
 settings. Indeed, while we can easily draw samples from the unconditional 
distribution $\pr(W)$ through $w(Z)$, where $Z\sim\pr(Z)$, Step~3 of Procedure~\ref{proc:general-W} requires draws 
from the unwieldy conditional distribution $\pr(W \mid U^\obs)$.
 Our main proposal in the next section directly addresses this computational issue.



\section{Using design symmetry to construct computationally tractable permutation tests}
\label{sec:perm_designs}
\label{section:overview}

This section motivates the use of permutation tests for a broad class of randomized group formation experiments. 
To do so, we show that certain designs can lead to computationally tractable conditional distributions $\pr(W \mid  U)$, which are crucial in the randomization tests discussed above.
This section relies on results from algebraic group theory; readers interested in the concrete consequences of these results on the design of randomization tests in our setting may skip ahead to Section~\ref{section:practical}.

\newcommand{\SN}{\mathbb{S}_N}
\newcommand{\SAU}{\mathbb{S}_{A,U}}

 \subsection{Equivariant maps and stabilizers} \label{section:group-theory}

 This subsection introduces three key algebraic concepts for our main theoretical result.
Let $\SN$ be the symmetric group containing all permutations of $N$ elements; 
i.e., bijections of $\{1, \ldots, N\}$ onto itself.  For any permutation $\pi \in \SN$ and a real-valued $N$-length vector $X \in\mathbb{X}\subseteq\mathbb{R}^N$, let $\pi X = (X_{\pi^{-1}(i)})_{i=1}^N$ be the vector obtained by permuting the indices of $X$ according to $\pi$.
\begin{definition}[Stabilizer]
$\mathbb{X}$ is closed under $\SN$ in the sense that $\pi X \in \mathbb{X}$ for all $\pi \in \SN$ and  $X \in \mathbb{X}$. 
Fix $X\in\mathbb{X}$.
The set
  $\SN(X) = \{\pi \in \SN: \pi X = X\}$
  also forms a group and is called the stabilizer of $X$ in $\SN$.
\end{definition}
A stabilizer $\SN(X)$ captures all possible ways of permuting
$X$ without changing $X$. For instance, if $X$ is a binary vector, then a
permutation $\pi \in \SN(X)$ separately permutes elements with $X_i = 0$ and $X_i=1$, respectively. This formalizes the argument we sketched out in Section~\ref{section:challenges}: the operations that ``permute units with the same attribute'' are precisely the elements of $\SN(A)$, the stabilizer of the attribute vector $A = (A_i)_{i=1}^N$ in the symmetric group.

\begin{definition}[Orbits and Partitions]
Fix a subgroup of the symmetric group $\Pi\subseteq \SN$.
Fix $X\in\mathbb{X}$, where $\mathbb{X}$ is closed under $\Pi$.
  Then, the set
  $\{\pi X  : \pi \in \Pi\}$
  is called the orbit of $X$ with respect to $\Pi$.
  These orbits define a unique partition of $\mathbb{X}$,
  denoted by $\mathcal{O}(\mathbb{X}; \Pi)$.
\end{definition}
Thus, an orbit is a collection of vectors that are permuted versions of one another. A key property of orbits is that they partition the set that the permutations act upon. This is important in our application because
our permutation test on $W$ essentially conditions on an orbit, and we would like 
the symmetries of our design $\pr(L)$ to be propagated to the conditional distribution of $L$ given an orbit. The 
final property that guarantees such symmetry propagation is equivariance.
\begin{definition}[Equivariant maps]
Fix a subgroup of the symmetric group $\Pi\subseteq \SN$. 
Sets $\mathbb{X}$ and $\mathbb{X}'$ are closed under  $\Pi$ in the sense that $\pi X\in \mathbb{X}$ and $\pi X' \in \mathbb{X}'$ for all $X\in \mathbb{X}, X' \in \mathbb{X}'$ and $\pi \in \Pi$. A function $f: \mathbb{X} \to \mathbb{X}'$ is equivariant with respect to $\Pi$ if
  $$
  f(\pi X) = \pi f(X),~\text{for all}~X\in\mathbb{X},~\pi\in\Pi.
  $$
\end{definition}
By definition, equivariant maps preserve a symmetry from their domain to their target set. 
This concept is crucial for our main theoretical result, which we turn to next.

\subsection{Main result: Sufficient conditions for valid permutation tests on  exposures}
\label{section:propagating}

We now state our main theoretical result, which establishes that if the exposure function, $\wl(\cdot)$, and the focal unit selection function, $\ul(\cdot)$, are equivariant with respect to a particular permutation subgroup, 
then the treatment exposure $W$, is uniformly distributed within an orbit defined by that subgroup. 

\begin{theorem}\label{thm1}
 \TheoremOne
\end{theorem}

Theorem~\ref{thm1} formalizes the intuition
behind the example in Section~\ref{section:challenges}: under the conditions of the theorem, we can implement Procedure~\ref{proc:general-W} by directly permuting the exposures of {\em only} the focal units, and making sure that these permutations are stratified with respect to the attribute value; the space of these permutations is exactly $\SAU$. 
The sharp null of Procedure~\ref{proc:sharp-W} is a special case of this result by defining $\ul(L) = \ones_N$, i.e., by selecting all units to be focals. 
In this special case, $\SAU = \SN(A)$ and so we can directly permute the entire exposure vector, $W$, across units with the same attribute value.

All three conditions in Theorem~\ref{thm1} are intuitive and testable in practice.
Condition (a) expresses a {\em design symmetry} condition. This depends on the experimental design, and will generally be satisfied for a permutation group that is larger than $\SAU$, such as in the stratified and completely randomized designs we consider in the next section. 
In particular, the design symmetry condition holds for both our applications.
For instance, in~\citet{cai2017interfirm}, the design is invariant to permutations between firms of the same size and industry in the same subregion~(i.e., the  attribute $A$ is a vector of length 3); we discuss this condition more in Section~\ref{section:applications}.

Condition (b) depends on the definition of the exposures, and is part of the analysis rather than the design. This condition posits that, for two units with the same attribute $A$ and focal status $U$, swapping the group label assignments also swaps their exposures;
Condition (b) does not require the exclusion restriction in Assumption \ref{asst:properly-specified}.
Finally, Condition (c) is also under the analyst's control and requires that swapping group label assignments for two units also swaps their selection as focal units.

We note that Theorem~\ref{thm1} is more general than the specific group formation design settings we consider in this paper. 
In particular, our definition of the exposure function $\wl(\cdot)$ in Eq.~\eqref{eq:exposure} satisfies Condition (b), and our definition of the focal selection function $\ul(\cdot)$ in Eq.~\eqref{eq:focals-1} satisfies Condition (c). 
In fact, Condition (c) holds more generally whenever focal selection depends on whether the observed exposure belongs to a predefined set. 
We summarize these results in the following lemma; see~Appendix~\ref{proof:conditions} for its proof.
\begin{lemma}
    \label{lemma::conditions-b-c}
     Conditions (b)-(c) of Theorem~\ref{thm1} hold with the definitions in Eqs. \eqref{eq:exposure} and \eqref{eq:focals-1}.
\end{lemma}
Since Conditions (b) and (c) hold in our setting, we will only check the design symmetry in Condition~(a) going forward.

Finally, as a technical note, Theorem~\ref{thm1} contributes to the existing theory of randomization tests by 
providing sufficient conditions under which symmetry in distribution of a random variable 
implies symmetry in distribution to a {\em function} of that variable.
In our context, while the standard theory of randomization tests ~\citep{lehmann2006testing} could be applied on hypotheses in the space of labels ($L$), it is not directly applicable in the space of exposures, $W=\wl(L)$. This is because $W$ is not generally invariant to permutations even when $L$ is. 
The toy example in Section~\ref{section:challenges} illustrated this point by considering permutations of the exposure vector that were inconsistent with the experimental design. Theorem~\ref{thm1} delivers conditions under which $W$ maintains a permutation symmetry like $L$. Crucially, the theorem also characterizes the permutation subgroup ($\SAU$) for which such symmetry propagation is possible.

\section{Permutation tests in two group formation designs} 
\label{section:practical}
We now apply the theory of the previous section
in practice. 
We consider two designs, the stratified randomized design and completely randomized design, and show that these designs have the required symmetries for permutation tests on exposures.

\subsection{Stratified randomized design}
\label{sec:sr_designs}

The stratified randomized design is an important special case of group formation design that satisfies the design symmetry condition in Theorem~\ref{thm1}(a).
Specifically, we consider designs that, separately for each level of attribute $A$, assign $K$ group-labels to $N$ units completely at random.
In a trivial setting with a binary attribute and two individuals per group, this design randomly assigns one individual of each type to each group.
\begin{definition}[Stratified randomized design]\label{def:sr_designs}
	Consider a distribution of group labels, $\pr (L)$, that assigns equal probability to all vectors
	$L$ such that for every attribute $a \in \Aset$ and every group-label $k \in \{1, \ldots, K\}$, the number of units with attribute 
	$A_i = a$ assigned to group-label $k$ is equal to  a fixed integer $n_{a,k}$. The design $\pr (Z)$ induced by such $\pr (L)$ is called a stratified randomized group formation design, denoted by $\SR(\nak_A)$, where $\nak_A   =   (n_{a,k})$ satisfies the constraint that 
$\sum_{k=1}^K n_{a,k} = |\{i \in \Ib: A_i = a\}|$. 
\end{definition}

The stratified randomized design  generalizes the design in   \citet[][Section~2.4.2]{li2018randomization} by allowing the group sizes to vary.
As an illustration, Figure~\ref{fig:SR_illustration} shows all possible assignments for two stratified randomized 
designs in a setting in which we allocate students with a binary attribute to their dorm rooms. The design on the left is $\SR(\nak_A)$ with $(n_{0,1}, n_{0,2})  = (1,2)$, meaning that there is one unit with attribute $A_i = 0$ assigned to room $1$, and two to room $2$; and $(n_{1,1}, n_{1,2}) = (2, 0)$, meaning that that there are two units with 
attribute $A_i = 1$ assigned to room 1, and no unit assigned to room $2$. The design on the right is
$\SR(\nak_A ' )$ 
with $(n_{0,1} ' , n_{0,2} ' ) = (2,1)$ and $(n_{1,1} ' , n_{1,2} ')  = (1,1)$.

\begin{figure}[bt]
	\center
	\includegraphics[scale=0.7]{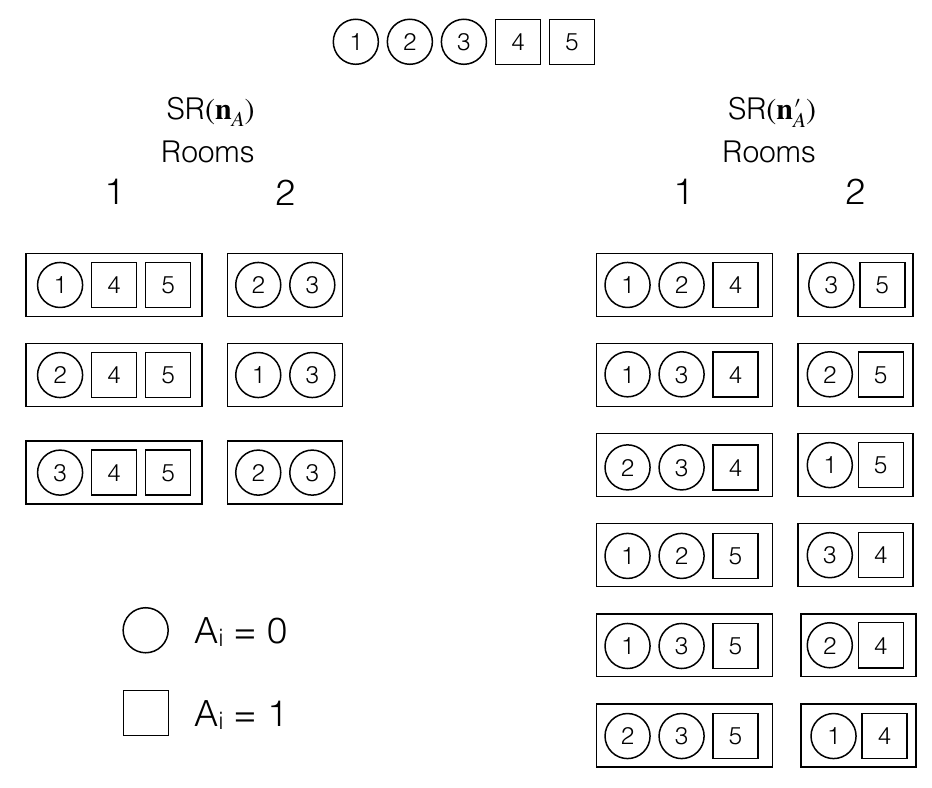}
	\caption{
	Example of supports for two latent distributions $\pr (L)$ inducing two stratified randomized experiments. Both examples have $N=5$ units, $K=2$ rooms labelled 1 and 2, and a binary 
	attribute. Left: $(n_{0,1}, n_{0,2}) = (1,2)$ and $(n_{1,1}, n_{1,2}) = (2, 0)$. 
	Right: $(n'_{0,1}, n'_{0,2}) = (2,1)$ and $(n'_{1,1}, n'_{1,2})= (1,1)$.
	}
	\label{fig:SR_illustration}
\end{figure}

Importantly,  the stratified randomized design satisfies the design symmetry 
condition in Theorem~\ref{thm1}(a) since the number of units assigned to any attribute-label pair remains fixed under any permutation of the labels that stratifies on $A$. See Appendix~\ref{proof:stratified} for the proof.

Our recommended  procedure for testing the sharp null under a stratified design is as follows:
\begin{manualproc}{1c}[Sharp null under the stratified randomized design]\label{proc:sharp-permute}
Consider observed assignment $\Zobs \sim \mathrm{SR}(\nak_A)$ and corresponding 
exposure $\Wobs$.
\begin{enumerate}
\item Observe outcomes, $\Yobs = \Yw(\Wobs)$.
\item Compute $\Tobs = \Tw(\Wobs; \Yobs)$; e.g., as defined in Section~\ref{section:rand-sharp}.
\item For $r=1,\ldots, R$, obtain $W^{(r)}$ via a random permutation of $\Wobs$, stratifying on the attribute $A$, and then compute $T^{(r)} = \Tw(W^{(r)}; \Yobs)$. 
\item Compute the approximate p-value 
$\mathrm{pval}(\Wobs) = R^{-1} \sum_{r=1}^R \iv(T^{(r) }\geq \Tobs)$.
\end{enumerate}
\end{manualproc}

In Step 3 above, we randomly permute $\Wobs$ stratifying on attribute $A$, that is, we randomly permute within each subvector of $\Wobs$ corresponding to a given value of $A$. 
This procedure is identical to how one would analyze a 
stratified completely randomized multi-arm trial in the non-interference 
setting --- with the exposure vector $\Wobs$ being the analog to the
treatment vector in that case \citep[][Chapter 9]{imbens2015causal}. That is, given the data 
$(Y_i, W_i, A_i)_{i=1}^N$, the analyst simply perform a  
complete randomization test stratified on $A$. 

The analogy with the traditional setting extends --- with minor 
modifications --- to testing the non-sharp nulls introduced in Section~\ref{section:rand-non-sharp}. Recall that for Procedure \ref{proc:nonsharp-permute},
the test statistics are restricted to focal units, i.e., $T(z; Y, U)$. 
Our recommended procedure for testing non-sharp nulls under a stratified design is then:
\begin{manualproc}{2c}[Non-sharp nulls under the stratified randomized design]\label{proc:nonsharp-permute}
Consider observed assignment $\Zobs \sim \mathrm{SR}(\nak_A)$ and corresponding 
exposure $\Wobs$.
\begin{enumerate}
\item Observe outcomes, $\Yobs = \Yw(\Wobs)$. 
\item Let $U^{\obs} = u(\Zobs)$ be the focal unit selection as in~\eqref{eq:focals-1}.
\item Compute $\Tobs = \Tw(\Wobs; \Yobs, U^{\obs})$ as in~\eqref{eq:test-stat}.

\item 
For $r=1,\ldots, R$, obtain $W^{(r)}$ via a random permutation of $\Wobs$, restricted only to focal units~($\Uobs_i=1$) and stratifying on the attribute $A$. Compute $T^{(r)} = \Tw(W^{(r)}; \Yobs, \Uobs)$. 
\item Compute the approximate p-value 
$\mathrm{pval}(\Wobs) = R^{-1} \sum_{r=1}^R \iv(T^{(r)} \geq \Tobs)$.
\end{enumerate}
\end{manualproc}
Although less obvious than in the case of Procedure~\ref{proc:sharp-permute}, 
Procedure~\ref{proc:nonsharp-permute} also connects to traditional randomization tests. Indeed, given 
the data $(Y_i, W_i, A_i)_{i=1}^N$, the analyst first subsets the array to 
contain only focal units~($\Uobs_i=1$), and then simply  performs a stratified complete randomization test on this reduced data, stratifying on $A$. 
Interestingly, there is a gap in the literature for randomization 
tests for non-sharp null hypotheses, even in traditional stratified 
randomized experiments without peer effects. Our permutation test applies to the traditional setting as well. Finally, we note that both Procedures~\ref{proc:sharp-permute} and~\ref{proc:nonsharp-permute} are finite-sample exact with a direct
application of Theorem~\ref{thm1}. See Appendix~\ref{proof:stratified} for details.

\subsection{Completely randomized design}
\label{section:crd}

Another common design is the completely randomized design, which fixes the \emph{overall} number of units  that receive each group-label, without stratifying on the attribute. 
Despite this difference, we will show that the completely randomized design can be analyzed exactly like a stratified randomized design by conditioning on the observed attribute-group assignments.

\begin{definition}[Completely randomized design]
	Consider a distribution of group labels, $\pr(L)$, that assigns equal probability to all vectors $L$ such that 
	for every group-label $k \in \{1, \ldots, K\}$, the number of units assigned to group-label $k$ is equal 
	to a fixed integer $n_k$. The design $\pr(Z)$ induced by such $\pr(L)$ is a completely randomized group formation 
	design,  denoted by $\CR(\nak)$, where $\nak = (n_1, \ldots, n_K)$ satisfies $\sum_{k=1}^K n_k = N$.
\end{definition}

The completely randomized design generalizes the design in \citet[][Section~2.4.1]{li2018randomization} 
by allowing the size of the groups to vary. 
Importantly, we can construct a stratified randomized design from a completely 
randomized design by conditioning on the number of units with each level of the attribute in each group. As a result, conditional on $\nak_A$, we can analyze a completely randomized group formation design exactly like a stratified randomized design. 
\begin{corollary}\label{corollary:specific}
	Consider $\pr(Z) \sim \CR(\nak)$. The null hypotheses $H_0$ (resp. $H_0^{\h_1,\h_2}$) 
	can be tested with Procedure~\ref{proc:sharp-permute} (resp. Procedure~\ref{proc:nonsharp-permute}) as if the 
	design were $\SR(\nak_A)$, where $\nak_A$ is the observed number of units with each value of the 
	attribute $A$ assigned to each group. 
\end{corollary}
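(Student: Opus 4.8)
The plan is to reduce the completely randomized case to the stratified randomized case, which has already been handled, by conditioning on the attribute-by-group count matrix $\nak_A$. First I would note that $\nak_A$ is a deterministic function of the group-label vector $L$ (equivalently of $Z$), so conditioning on it is legitimate. The structural observation driving the proof is that under $\CR(\nak)$ the law $\pr(L)$ is uniform over the set of all label vectors with group sizes $\nak$; intersecting this set with the event $\{\nak_A(L) = \matn_A\}$ and renormalizing again produces a uniform distribution, now over the label vectors that in addition respect the counts $\matn_A$. By Definition~\ref{def:sr_designs}, that conditional law is exactly $\pr(L)$ under the stratified design $\SR(\matn_A)$. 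Hence $\pr(L \mid \nak_A = \matn_A)$ coincides with $\pr_{\SR(\matn_A)}(L)$, and therefore so do the induced conditional distributions of $Z$, of the exposure $W = w(Z)$, and of the focal set $\Uset = \uf(Z)$.

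Next I would invoke the validity results already established. Fix a realized value $\matn_A$ of the count matrix and condition on the event $\{\nak_A = \matn_A\}$. Under this conditioning the assignment mechanism of the true $\CR(\nak)$ experiment is exactly $\SR(\matn_A)$ by the previous paragraph, while the null hypothesis --- a statement about the fixed potential outcomes $Y_i(\cdot)$ --- is unchanged. Running Procedure~\ref{proc:sharp} (resp. Procedure~\ref{proc:general}) ``as if the design were $\SR(\matn_A)$'' therefore resamples in Step~3 from precisely the assignment distribution called for in that procedure, namely $\pr_{\SR(\matn_A)}(Z') = \pr(Z' \mid \nak_A = \matn_A)$ (resp. $\pr_{\SR(\matn_A)}(Z' \mid \Uobs) = \pr(Z' \mid \Uobs, \nak_A = \matn_A)$). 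By Proposition~\ref{th:sharp} (resp. Proposition~\ref{th:general}), the output is a valid p-value for $H_0$ (resp. $H_0^{\h_1,\h_2}$) conditionally on $\{\nak_A = \matn_A\}$, and this holds for every $\matn_A$.

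Finally I would pass from conditional to marginal validity by averaging over $\nak_A$: $\pr\{\mathrm{pval}(\Zobs) \le \alpha\} = \sum_{\matn_A} \pr\{\mathrm{pval}(\Zobs) \le \alpha \mid \nak_A = \matn_A\}\,\pr(\nak_A = \matn_A) \le \sum_{\matn_A} \alpha\,\pr(\nak_A = \matn_A) = \alpha$, which is the desired conclusion. The same reasoning applies verbatim to the exposure-based and permutation implementations (Procedures~\ref{proc:sharp-W}, \ref{proc:general-W}, \ref{proc:sharp-permute}, \ref{proc:nonsharp-permute}): $\nak_A$ is recoverable from $W$ together with the fixed attribute vector $A$, and conditioning on $\nak_A$ turns each of these into its stratified-design counterpart, whose validity is guaranteed by Proposition~\ref{prop:sr} and Theorem~\ref{th:sufficient-conditions}.

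I do not expect a genuine obstacle; the only point requiring care is the double role of $\nak_A$, which is at once (i) a statistic, so that ``conditioning on $\nak_A$'' and ``pretending the design is $\SR(\nak_A)$'' describe literally the same randomization distribution, and (ii) uninformative about the null, so that conditional validity is available for each of its values and can be averaged away. Once these are pinned down, the corollary is immediate from Propositions~\ref{th:sharp} and~\ref{th:general}, together with Proposition~\ref{prop:sr} and Theorem~\ref{th:sufficient-conditions}, with no new computation.
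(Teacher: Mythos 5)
Your proposal is correct and is essentially the paper's own argument: the paper conditions on the $\Scal_A$-orbit of $L$ (whose level sets are exactly the events $\{\nak_A = \matn_A\}$), observes that the conditional law is precisely $\SR(\nak_A^{\obs})$, applies the conditional-test validity result with the enlarged conditioning event $(\Uset^{\obs}, \text{orbit})$, and then notes that conditional validity implies marginal validity --- the same reduction, conditioning step, and averaging you describe. Your appeal to Proposition~\ref{prop:sr} and Theorem~\ref{th:sufficient-conditions} for the permutation implementation likewise matches the paper's closing remark that $\pr(L \mid \text{orbit})$ is $\Scal_A$-symmetric, so the Section~\ref{sec:sr_designs} simplifications apply.
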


This connection is important since many designs are not stratified on the attribute of 
interest; e.g., the application we analyze in Section~\ref{section:li} uses a completely randomized 
design rather than a stratified randomization design.
Importantly, conditioning on $\nak_A$ is necessary to ensure the validity of the permutation test even in completely randomized designs. Figure \ref{fig:example} gives an example in which the unconditional permutation test is invalid. 

\begin{remark}[Incorporating additional covariates]\label{section:cov}
All our procedures can be extended to incorporate additional covariates in the design and analysis stages. These strategies will generally increase the power of the test, so long as covariates are predictive of the potential outcomes~\citep{zhao2020covariate}. 
Most immediately, we could stratify both the permutations and the test statistic by an additional discrete covariate, $X$.
We could also consider regression-adjusted test statistics, rather than test statistics based on the raw outcomes \citep{rosenbaum2002observational}.
We could further tailor these models to a particular interference structure; for instance, \citet{athey2018exact} propose a test statistic derived from the linear-in-means model. Importantly, this approach does \emph{not} assume that the linear-in-means model is correct, but rather that this parameterization captures departures from the null hypothesis. 
In Appendix~\ref{appendix:power}, we perform a simulated study to illustrate these points.
\color{black}
\end{remark}

\section{Applications}
\label{section:applications}

We illustrate our approach by re-analyzing two randomized group formation experiments.
The first application is from 
\citet{li2018randomization}, who assess the impact of randomly assigned roommates on student GPA. 
Our conditional testing approach yields results that are consistent with their randomization-based estimate.  
The second application is from \citet{cai2017interfirm}, who conduct a randomized experiment to estimate the effect of social connections on firm performance. 
Our approach complements the results from their regression-based estimates by uncovering interesting heterogeneity in the peer group effect.

\subsection{Random roommate assignment}
\label{section:li}

\citet{li2018randomization} explore the impact of the composition of randomly assigned roommates on student academic performance among students at a top Chinese university.
For ease of exposition, we restrict our analysis to the $N = 156$ male students admitted to the Department of Informatics, the largest department in the original study. 
The attribute of interest is whether students are admitted via a college entrance 
exam ($A_i = 1$), known as \emph{Gaokao}, or via an external recommendation ($A_i = 0$). Students are 
assigned to dorm rooms of size four via complete randomization, as described in Section~\ref{section:crd}; 
that is, the number of students of each background in each room is a random quantity. 

The exposure of interest is the number  of roommates admitted via the entrance exam $w_i(Z) = \sum_{j\in Z_i} A_j$.
We focus on the null hypothesis $H_0^{0,3}: \Yw_i(0) = \Yw_i(3)$ for all $i=1,\ldots, N=156$, that is, a student's 
end-of-year GPA is the same if he is randomly assigned to have zero \emph{Gaokao} roommates versus three 
\emph{Gaokao} roommates. Moreover, following \citet{li2018randomization}, we want to test this null hypothesis 
separately for \emph{Gaokao} and recommendation students, which we denote $H_0^{0,3}(1)$ and 
$H_0^{0,3}(0)$ respectively.
Here, Assumption \ref{asst:properly-specified} states that group formation only affects end-of-year GPA by changing the number of {\em Gaokao} roommates for a student. This excludes, for example, the subject area or sociability of roommates as important mechanisms for group peer effects.
Among 17 students from {\it Gaokao}, 13 have observed exposure $W_i^{\obs}=0$ and 4 have observed exposure $W_i^{\obs}=3$; among 45 students from recommendation, 40 have observed exposure $W_i^{\obs}=0$ and 5  have observed exposure $W_i^{\obs}=3$.
Table~\ref{table:li-pvals} reports the $p$-value, Hodges--Lehmann point estimate,
\footnote{See Appendix \ref{section::hodgeslehmann} for the discussion of the Hodges--Lehmann estimate.}
and test inversion confidence interval for the overall null hypothesis $H_0^{0,3}$ and the subgroup null hypotheses $H_0^{0,3}(1)$ and $H_0^{0,3}(0)$. 

Our results are substantively close to those obtained by \cite{li2018randomization}. 
First, our point estimates 
are identical to those from \citet{li2018randomization} by symmetry. 
Our $p$-values and confidence intervals, however, are more conservative, in the sense of showing weaker evidence against the null.
Specifically, \cite{li2018randomization} find $p$-values $\leq 0.05$ for all three null hypotheses, while we only reject $H_0^{0,3}$ at that level.
One possible explanation for this discrepancy is that, while our $p$-values are exact, \cite{li2018randomization} instead use an asymptotic approximation, which may be unwarranted given the small sample size.
We investigate this more in Appendix~\ref{appendix:li}, where we conduct a calibrated simulation based on this application and show that normal asymptotics can fail severely here.
%

\renewcommand{\arraystretch}{1}
\begin{table}[t!]
	\center
		\caption{$p$-values, Hodges--Lehmann point estimates and $95\%$ confidence intervals for the application of~\cite{li2018randomization}.} 

	\begin{tabular}{|c|c|c|c|c|c|c|}
	\hline
		& $p$-value & estimate & confidence interval \\
		\hline
	$H_0^{0,3}$  & 0.03 & $-0.32$  & $(-0.65, -0.05)$  \\
	$H_0^{0,3}(0)$ & 0.058 & $-0.37$ & $(-0.74, 0.005)$ \\
	$H_0^{0,3}(1)$ & 0.22 & $-0.29$ & $(-0.8, 0.1)$ \\
	\hline
	\end{tabular}
	\label{table:li-pvals}
\end{table}


\subsection{Meeting groups among firm managers}
\label{section:cai}
We now turn to the study from \citet{cai2017interfirm}, in which CEOs of Chinese firms were randomly assigned to meetings where they discussed management practices, with ten managers per group. Groups were encouraged to meet monthly for roughly a year; firms assigned to control did not meet.
The primary outcome of interest is growth in firm sales, defined as the difference in (log) firm sales from endline to baseline.\footnote{\citet{cai2017interfirm}  collected survey data at baseline, midline, and endline. While the authors analyzed the experiment using panel data regression, we side-step the panel structure here by defining the outcome as the difference in log firm sales between endline and baseline. We note, however, that our framework accommodates a wide range of outcomes and test statistics, including those generated by panel regressions.}
\citet{cai2017interfirm} focused on the impact of assigning CEOs to meeting groups versus a business-as-usual control group. 
Here we revisit a secondary analysis in their paper that explores the role of peer composition. In particular, among treated firms, the group formation design was stratified across three attributes: firm sector (manufacturing/service), location (26 subregions), and firm size (small/large).\footnote{Firm size is dichotomized at median employment of the sample of firms in the corresponding subregion, where the authors use the number of employees at baseline as a proxy for the quality of the firm.} 
Using this design, \citet{cai2017interfirm} ``ask whether firms randomized into groups with larger peers grew faster,'' finding evidence in the affirmative. 

We revisit this question using our proposed randomization inference framework, where firm size is the exposure of interest. 
In particular, we focus on the 1,323 firms with non-missing data (on size and revenue) that were randomly assigned to meetings. 
We first consider the global sharp null of any effect of peer size on sales, and 
then highlight a source of peer effect heterogeneity by testing the sharp null within subgroups defined by sector and size.
In the Appendix, we also consider alternative exposure definitions and look at pairwise, non-sharp null hypotheses to further explore this source of heterogeneity. 

\paragraph{Global sharp null hypothesis.}
We start with the global sharp null hypothesis that there is no effect whatsoever of peer size on sales. 
The exposure of interest is $W_i = \frac{1}{|Z_i|}\sum_{j \in Z_i} \text{size}_j$, where $Z_i$ is the set of  peer firms for firm $i$, and $\text{size}_j$ is the log-number of employees in firm $j$ at baseline. Let $\Wdom \subset \mathbb{R}$ be the exposure domain, then  the global sharp null hypothesis is:
\begin{equation}\label{eq:regression}
H_0: \Yw_i(\h) = \Yw_i(\h') \mbox{ for all } i \in \Ib \text{ and }\h, \h'\in \Wdom.
\end{equation}
That is, under $H_0$, the average employee size of firm $i$'s peer group does not affect the firm's revenue. 
As we discuss in Section~\ref{sec:assumptions}, Assumption \ref{asst:properly-specified} plays a critical role in interpreting a rejection of our null hypothesis. In this application, Assumption \ref{asst:properly-specified} states that group formation only affects sales by changing the size of a firm's peer companies. This excludes, for example, the number of other peer firms' \emph{clients} (rather than number of employees) from affecting a firm's own revenue. To check robustness, we explore alternative definitions of the exposure in Appendix \ref{appendix:cai_additional}.

To mirror the analysis in \citet{cai2017interfirm}, we set the test statistic to be the coefficient of $W$ in the following linear regression:\footnote{To aid interpretation, we follow the regression specification in \citet{cai2017interfirm}. However, the randomization inference theory from \citet{li2018randomization} shows that a regression specification that includes the interaction of $A$ and $W$ is also justified by the randomization itself.}
\begin{equation} \label{eq:cai_szeidl_test_stat}
Y_i^{\text{obs}} = \alpha + \beta A_i^* + \tau W_i + \varepsilon_i,
\end{equation}
where $A_i^* = \mathrm{sector}_i \times \mathrm{location}_i \times \mathrm{size}_i$  includes all interactions between firm sector, location, and size for unit $i$. 
We can now employ Procedure~\ref{proc:sharp-W} to test $H_0$, computing a one-sided $p$-value of $p = 0.02$ over 20,000 replications.
Importantly, even if the linear model in Equation \eqref{eq:cai_szeidl_test_stat} is not correctly specified, the randomization test remains finite-sample valid. 

\paragraph{Heterogeneity by firm size and type.}
Since our approach is exact in finite-samples, we can easily restrict our analysis to subsets of firms, here defined by sector and size following \citet{cai2017interfirm}. 
We repeat Procedure~\ref{proc:sharp-W} separately within each subgroup, using the estimated coefficient $\tau$ from Equation \eqref{eq:cai_szeidl_test_stat}, except with the levels of $A^\ast$ restricted to the appropriate subgroup.
The results in Figure \ref{tab1:sharp} show substantial heterogeneity in peer group effects.
In particular, the signal is concentrated entirely among small service firms ($p=0.0015$), and is essentially zero for the other three subgroups.

\begin{figure}[t!]
\centering
\begin{tabular}{cc}
  \includegraphics[scale=0.2]{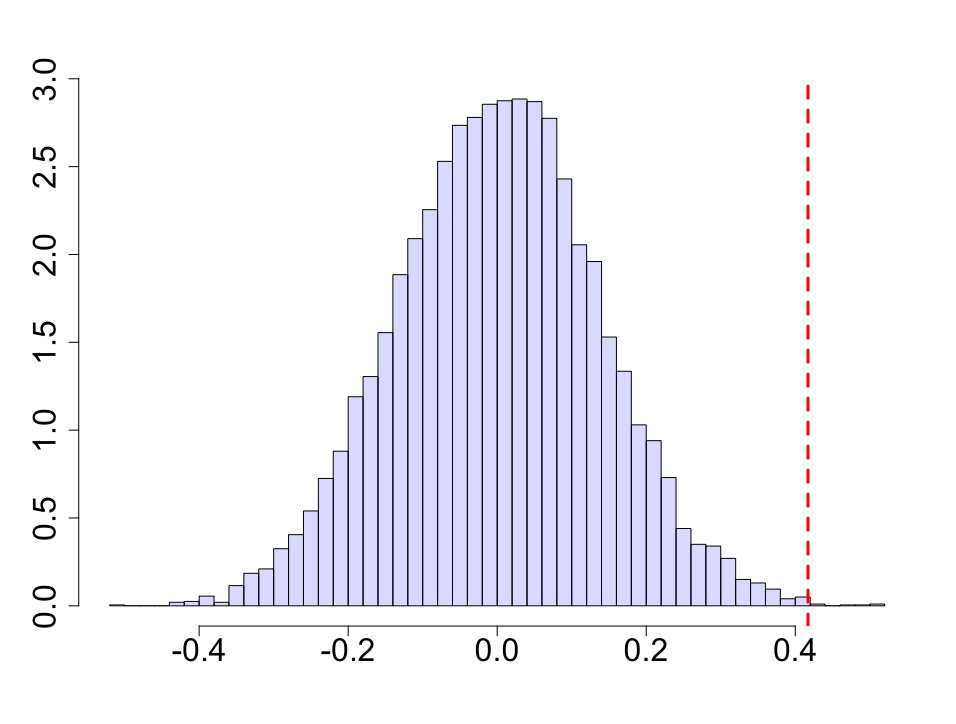}    &  \includegraphics[scale=0.2]{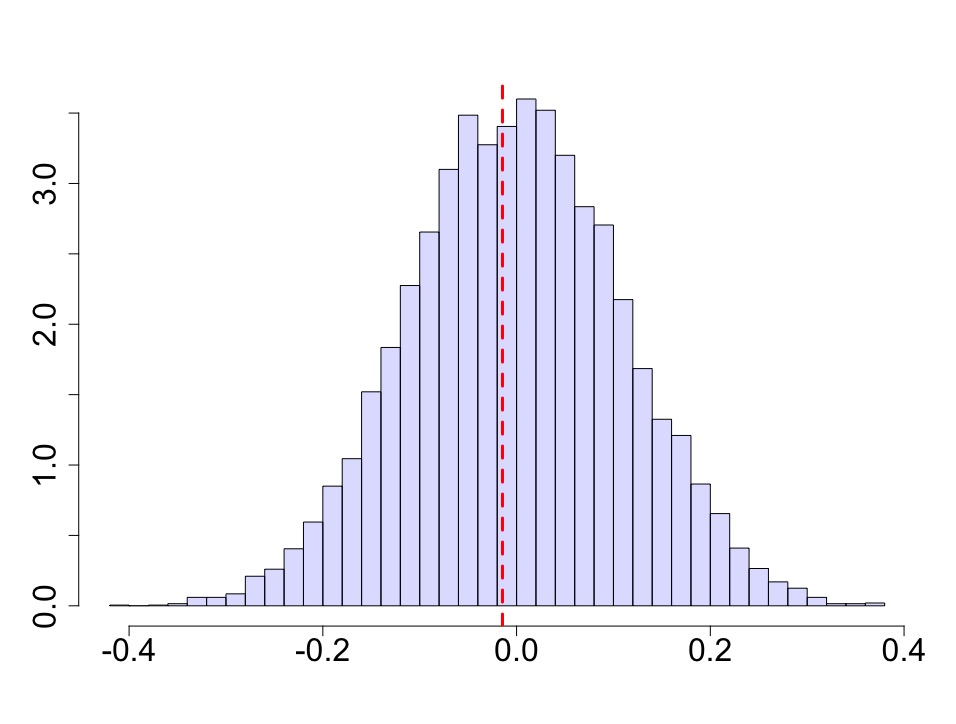}\\
{\small small service firms~($n$=314, $p$=0.0015)} & {\small small manufacturing firms~($n$=349, $p$=0.56)} \\
 \includegraphics[scale=0.2]{art/sharp_01.jpeg}   &  \includegraphics[scale=0.2]{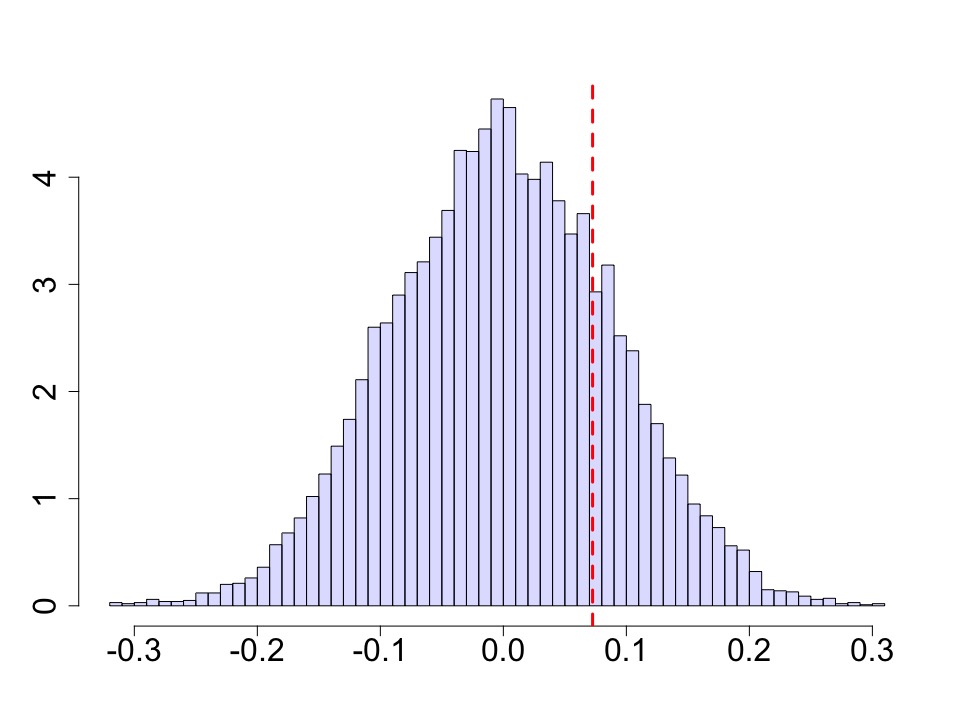}\\
  {\small  large service firms~($n$=333, $p$=0.23) } & {\small large manufacturing firms~($n$=327, $p$=0.21)} \\
\end{tabular}
\caption{Randomization distributions for $H_0$ in~\eqref{eq:regression} within firm subpopulations according to 
size and sector. The dashed line indicates the observed value of the test statistic; 
`$n$' is the subpopulation size, and `$p$' is the one-sided p-value calculated from Procedure~\ref{proc:sharp-W}.
}
\label{tab1:sharp}
\end{figure}

\citet{cai2017interfirm} also explored heterogeneity, albeit only in the ``direct effects" from treatment (i.e., meetings versus no meetings) rather than in peer effects; they find larger firms benefited more from the meetings. 
Our analysis complements this picture by showing that the impact of larger peers was concentrated mainly among small service firms.
We emphasize that the regression specification of~\cite{cai2017interfirm} in~\eqref{eq:regression} cannot easily capture the heterogeneity we show here.
In particular, their regression model needs to include all size-sector-subregion interactions ($\sim$85 in total) dictated by the experimental design in order to identify $\tau$~\citep[see Section III.B in][]{cai2017interfirm}.
These interactions, however, essentially ``wash out" the size-sector interaction effect we observe here.
Thus our randomization-based analysis complements the regression-based analyses and offers new insights. 

Finally, an additional benefit of our analysis is that our $p$-values are exact, which is especially important for subgroups. In Appendix~\ref{appendix:cai_sim}, we highlight this  through a simulation study showing that regression-based tests can be severely distorted in simple but realistic group formation  designs motivated by \citet{cai2017interfirm}.


\section{Discussion}
\label{section:discussion}

We have proposed valid randomization tests for testing peer effects in group formation experiments. 
While a promising first step, there remain several open questions. First, our results motivate new considerations for the design of group formation experiments.  
In particular, arbitrary designs do not necessarily satisfy the sufficient conditions we propose for valid permutation tests. We therefore recommend using the experimental designs like the stratified and completely randomized designs in Section~\ref{section:practical} if researchers want to use our permutation-based tests.

Second, our approach is limited to the setting where units are assigned to groups. However, sometimes the group structure might be more elaborate. For example, we might assign students to classrooms and then separately assign teachers to those classrooms. Alternatively, we might be interested in multiple, possibly overlapping groups; e.g., students being nested within classrooms nested within schools. 
Finally, we have focused entirely on randomized group formation experiments. Randomizing peers, however, may often be infeasible or raise ethical concerns. Thus, extending these ideas to the observational study setting, especially for sensitivity analysis, is a promising avenue for future work.

 \singlespacing
\bibliographystyle{chicago}
\bibliography{refs}

\clearpage 

\appendix
\begin{center}
 \huge  \bf   Online Appendix
\end{center}


\pagenumbering{arabic} 
\renewcommand*{\thepage}{S\arabic{page}}

\setcounter{equation}{0}
\setcounter{section}{0}
\setcounter{figure}{0}
\setcounter{example}{0}
\setcounter{table}{0}
\setcounter{lemma}{0}
\setcounter{theorem}{0}

\renewcommand {\thetable} {A\arabic{table}}
\numberwithin{equation}{section}
\numberwithin{lemma}{section}

\color{black}

\section{Extensions}


\subsection{Hodges--Lehmann point and interval estimation}
\label{section::hodgeslehmann}

We can use the proposed procedure to obtain a Hodges--Lehmann point estimate \citep{hodges1963estimates} and confidence intervals by inverting a sequence of tests. \citet{rosenbaum2002observational} gives a textbook discussion, and \citet{basse2019rand} describe both in the context of conditional randomization tests.
The Hodges--Lehmann point estimate of the difference in location between two vectors $x, y$ of dimension $n$ and $m$, respectively, is the median of all their pairwise differences: $(x_i - y_j)$, for all $i\in [n], j\in[m]$.

Extending the pairwise null hypotheses to allow for a non-zero constant effect
$$
H_0^{\h_1,\h_2}(c): \Yw_i(\h_1) - \Yw_i(\h_2) = c \text{ for all } i \in \mathbb{U},
$$
we can determine the potential outcome $\{ \Yw_i(w_1) , \Yw_i(w_2)\}$ for all focal units, i.e., units with $w_i(Z) = \h_1$ or $w_i(Z) = \h_2$. So conditional on $\Uobs$, if the test statistic depends only on the values of the outcomes with $w_i(Z) = \h_1$ or $w_i(Z) = \h_2$ as in Eq.~\eqref{eq:test-stat}, then  $H_0^{\h_1,\h_2}(c)$ acts as a sharp null hypothesis. We can therefore compute the corresponding $p$-value, denoted by $p(c)$, based on the conditional randomization test in Section \ref{section:rand-non-sharp}. By inverting 
our test, we can derive $\{ c: p(c) \geq \alpha   \}$ as the $(1-\alpha)$ level confidence interval for the constant treatment effect. In Appendix~\ref{appendix:HL}, we examine the coverage of these confidence intervals through realistic simulations.

\subsection{Relaxing Assumption~\ref{asst:properly-specified}}  
\label{section:assumptions-hypothesis}
To clarify the role of Assumption~\ref{asst:properly-specified}, we can restate our hypotheses using more general notation:
\begin{equation*}\label{eq:type-1-alt}
	\tilde{H}_0: Y_i(Z) = Y_i(Z') \text{ for all } Z,Z' \text{ and for all } i \in \Ib
\end{equation*}
and
\begin{equation*}\label{eq:type-1-alt}
	\tilde{H}_0^{\h_1,\h_2}: Y_i(Z) = Y_i(Z') \text{ for all } Z,Z' \text{ such that } w_i(Z),w_i(Z') \in \{\h_1,\h_2\} \text{ and for all }  i \in \Ib . 
\end{equation*}
If Assumption~\ref{asst:properly-specified} holds, the null hypotheses $\tilde{H}_0$ and 
$\tilde{H}_0^{\h_1,\h_2}$ are equivalent to the null hypotheses $H_0$ and 
$H_0^{\h_1,\h_2}$; if it does not hold, the null hypotheses $H_0$ and $H_0^{\h_1, \h_2}$ are 
not well defined, while $\tilde{H}_0$ and $\tilde{H}_0^{\h_1,\h_2}$ can still be tested. 
In fact, the procedures in Section~\ref{section:valid} used for testing $H_0$ and $H_0^{\h_1,\h_2}$ 
can be used without any modification to test $\tilde{H}_0$ and $\tilde{H}_0^{\h_1,\h_2}$ regardless of Assumption~\ref{asst:properly-specified}. 

While Assumption~\ref{asst:properly-specified} does not affect the mechanics of the test, it does impose restrictions on the alternative hypothesis, which changes the interpretation of rejecting the null hypothesis. 
In particular, Assumption~\ref{asst:properly-specified} imposes two levels of exclusion restriction: one on the relevant 
attribute and one on the relevant group. Without this assumption, a number of different reasons could lead to rejecting the null hypotheses, $H_0$ or 
$H_0^{\h_1,\h_2}$.
For instance, we would reject these hypotheses if a unit's outcome depends on the composition of attributes other than $A$, or if $A$ is the 
relevant attribute but a unit's outcome depends on the composition of groups other than its own. 
Assumption~\ref{asst:properly-specified} rules out both of these alternative channels for peer effects, narrowing the interpretation of rejecting the null hypotheses.

In summary, it is possible to test the null hypotheses $\tilde{H}_0$ and $\tilde{H}_0^{\h_1,\h_2}$ 
using the procedures in Section~\ref{section:valid}, regardless of the validity of 
Assumption~\ref{asst:properly-specified}. The price paid for the additional flexibility is that rejecting the null 
becomes less informative, since the alternative hypothesis includes channels of interference that were 
otherwise ruled out by Assumption~\ref{asst:properly-specified}.

As we discuss in the main text, there is little guidance for applied researchers on specifying exposure mappings, in part because these mappings can be highly context dependent. Thus, developing recommendations for exposure mappings in practice, as well as assessing sensitivity to those choices, is a necessary next step. 

\subsection{Testing weak null hypotheses}
\label{section::testweaknull}
Our paper focuses on null hypotheses that impose a constant effect (usually zero) for all units. A natural question is how to extend our approach to \emph{average} (or weak) null hypotheses. In the no-interference setting, \citet{wu2018randomization} propose permutation tests for weak null hypotheses using studentized test statistics. The result in \citet[][section 5.1]{wu2018randomization} suggests that our permutation tests in Section \ref{section:practical} can also preserve the asymptotic type I error under weak null hypotheses with appropriately chosen test statistics. For example, we can test the following weak null hypothesis
$$
H_{0}^{\h_1,\h_2}:  \tau(\h_1, \h_2) = 0 
$$
where $ \tau(\h_1, \h_2) =  N^{-1} \sum_{i=1}^N Y_i(\h_1) - N^{-1} \sum_{i=1}^N  Y_i(\h_2).$
Following the argument in \citet{wu2018randomization}, Procedure \ref{proc:nonsharp-permute} will deliver an asymptotically valid $p$-value for $H_{0  }^{\h_1,\h_2}$ if we use the studentized statistic 
$$
T(z;Y,\mathcal{U}) = \frac{   \sum_{a \in \Aset }  \pi_{[a] } ( \hat{\bar{Y}}_{[a]\h_1} -    \hat{\bar{Y}}_{[a]\h_2} ) }
{   \sqrt{  \sum_{a \in \Aset }  \pi^2_{[a] } (  \hat{S}^2_{[a]\h_1} /n_{[a]\h_1}   +  \hat{S}^2_{[a]\h_2} / n_{[a]\h_2}  )   }  },
$$
where   $\pi_{[a] }$ is the proportion of $A_i=a$ among all units $i\in \mathbb{U}$, and $( n , \hat{\bar{Y}}, \hat{S}^2)$ are the sample size, mean and variance with subscripts denoting the attribute and exposure. Coupled with the Hodges--Lehmann strategy in Section \ref{section::hodgeslehmann}, we can also construct asymptotic confidence interval for the average treatment effect $ \tau(\h_1, \h_2)$  by inverting permutation tests. Simulations in Section \ref{appendix:HL} confirm this empirically, and show that the 
resulting confidence intervals are indeed informative.

\subsection{Connection with the classic stratified, multi-arm trial}

Our paper helps to clarify the relationship between randomized group formation experiments and traditional randomized stratified experiments in settings without interference or peer effects. In particular, we show that the designs we consider are equivalent to classic stratified randomized experiments with multiple arms. The non-sharp null hypotheses of interest correspond to contrasts between different arms of a multi-arm trial, possibly for a subset of units. Thus, at least with some reasonable simplifying assumptions, the otherwise complex setting of randomized group formation experiments reduces to a more familiar setup. As a byproduct, our proposed permutation tests are applicable to the classic designs as well.

\section{Additional analysis for \citet{cai2017interfirm}}
\label{appendix:cai_additional}

This appendix section provides additional analysis and discussion of the re-analysis of \citet{cai2017interfirm} in Section \ref{section:cai}.

\subsection{Discussion of Assumption~\ref{asst:properly-specified} --- Alternative definitions of exposures}\label{sec6:assumption}

As discussed in Section~\ref{sec:assumptions}, the interpretation of our test hinges on $W$ being well-specified in the sense of Assumption~\ref{asst:properly-specified}.
For instance, our tests could reject, in principle, even if $H_0$ was true but firm revenues differed across group assignments that produced the same peer size exposure. Here, we explore the robustness of our results to 
two alternative specifications of the exposure. In the next section, we consider an additional specification, which reflects 
the type of peer group exposure that was actually randomized by \cite{cai2017interfirm}.

In particular, we consider two additional definitions of exposures:
$$
W_i^{(1)} = \frac{1}{|Z_i|}\sum_{j \in Z_i} \text{binary\_size}_j,~~\text{or}~~
W_i^{(2)} = \frac{1}{|Z_i|}\sum_{j \in Z_i} \text{size}_j \cdot \text{revenue}_j,
$$
where $\text{binary\_size}_j=1$ if and only if firm $j$ has size larger than the median size in $j$'s region; 
and $\text{revenue}_j$ is the log-revenue of firm $j$ at baseline. The definitions capture coarser or finer versions, respectively, of our original exposure. For both these definitions, we run Procedure~\ref{proc:sharp-W} and report the results in Table~\ref{tab:many_exposures} below.

From Table~\ref{tab:many_exposures}, we observe that our results remain largely robust to the alternative exposure specifications we consider. 
For instance, across all specifications, we find a significant effect on small service firms, as in the previous section. There is one notable difference, however. 
Under the coarser exposure definition,  $W_i^{(1)}$, we find evidence for a {\em negative} peer group effect on small manufacturing firms (two-sided $p$-value=0.04). This effect likely averages out the 
positive effect on small service firms~(two-sided $p$-value=0.011), and produces a nonsignificant overall effect under $W_i^{(1)}$.

\renewcommand{\arraystretch}{1.2}
\begin{table}[t!]
\centering
\caption{Testing the sharp null under alternative exposures. `one-sided' indicates the one-sided $p$-value ($p$) from Procedure~\ref{proc:sharp-W} on the relevant subpopulation; 
`two-sided' is the corresponding two-sided $p$-value, $2\min(p,1-p)$; `$*$' indicates a significant $p$-value at 5\% level.}
\label{tab:many_exposures}
\begin{tabular}{l rrrr}
\hline
                          & \multicolumn{2}{c}{$W_i^{(1)}$} &\multicolumn{2}{c}{$W_i^{(2)}$}           \\
                          & one-sided  & two-sided & one-sided  & two-sided \\
                          \hline
                          \hline
all firms                 & 0.83       & 0.34      & 0.037$^*$     & 0.075    \\
\hline
small service firms       & 0.0056$^*$     & 0.011$^*$     & 0.0018$^*$     & 0.0036$^*$    \\
small manufacturing firms & 0.98       & 0.04$^*$      & 0.54       & 0.92      \\
large service firms       & 0.61       & 0.78      & 0.25      & 0.51     \\
large manufacturing firms & 0.95       & 0.1       & 0.29      & 0.59    \\
\hline
\end{tabular}
\end{table}


\subsection{Pairwise null hypotheses}\label{sec:cs_pairwise}
We now turn to pairwise non-sharp null hypotheses, extending the analysis of heterogeneity in the previous section. 
To that end, we focus on  small manufacturing firms for which we observed a negative peer group effect in the previous section.
We also consider a definition of treatment exposure that matches the type of exposure  randomized in the actual experiment.

In particular, \cite{cai2017interfirm} randomized firms into 4 group types, namely, ``small firms in the same sector", ``large firms in the same sector", ``mixed-size firms in the same sector", and ``mixed-size firms with mixed sectors".  We thus 
define the following discrete-valued exposure for a small manufacturing firm $i$:
\begin{equation}\label{eq:w3}
W_i^{(3)} = \begin{cases}
  \tS,  & \text{if firm $i$'s peer group is all small manufacturing firms}; \\
    \tSm,  & \text{if firm $i$'s peer group is all small firms of various sectors}; \\
  \tSL, & \text{if firm $i$'s peer group is mixed-size manufacturing firms}; \\
  \tSLm, & \text{if firm $i$'s peer group is mixed-size firms of various sectors.}
\end{cases}
\end{equation}

We consider four (weak) pairwise null hypotheses each comparing whether small manufacturing firms benefit from having a certain exposure level over another. 
For instance,  $H_0^{\tS, \tSL}(\text{small})$ denotes a null hypothesis 
to test whether there 
are benefits of having a mix of large and small manufacturing peers as opposed to having only small manufacturing peers; 
 $H_0^{\tS, \tSm}(\text{small})$ denotes whether there 
are benefits of having a mix of small service or small manufacturing peers as opposed to having only small manufacturing peers; and so on.
%

Table~\ref{table:cai-results} summarizes the results from using Procedure~\ref{proc:general-W} on these pairwise null hypotheses.
These results adds nuance to the negative peer group effect that we observed on small manufacturing firms in Table~\ref{tab:many_exposures}.
In particular, we find that this negative peer group effect on small manufacturing firms is mainly due to their exposure to other large manufacturing firms. The relevant null, $H_0^{\tS, \tSL}$, is strongly rejected (two-sided $p$-value= 0.008), and the inverted confidence interval from this test indicates a range of 15\% to 65\% in revenue loss from such exposure.
In contrast, no negative effects are observed when the exposure of small manufacturing firms is to small or large firms from a different sector (service).

\renewcommand{\arraystretch}{1.5}
\begin{table}[t!]
\centering
\caption{Two-sided $p$-values and inverted randomization-based confidence intervals (at 5\% level) for the pairwise weak nulls of Section~\ref{sec:cs_pairwise}. `$n$' indicates the number of units tested under the respective null, $H_0^{\h_1, \h_2}$; 
`$n_1$' is the number of firms exposed to $\h_1$, and `$n_2$ the number 
of firms exposed to $\h_2$~($n=n_1+n_2$).} 
	\label{table:cai-results}
	\begin{tabular}{|c|c|c|c|c|}
		\hline
		Null hypothesis & $n~(n_2/n_1)$ & $p$-value & point estimate & confidence interval \\
		\hline
		$H_0^{\tS, \tSL}(\text{small})$ & 179 (84/95) & 0.008 & -0.45 & (-1.05, -0.17) \\
  $H_0^{\tS, \tSm}(\text{small})$ & 139~(44/95) &  0.76 & -0.55 & (-1.10, 0.84) \\
    $H_0^{\tSL, \tSLm}(\text{small})$ & 188 (104/84) &  0.91 & 0.016 & (-0.44, 0.41) \\
  $H_0^{\tSm, \tSLm}(\text{small})$ & 148 (104/44) &  0.29 & 0.11 & (-1.24, 0.40) \\
		\hline		
	\end{tabular}
\end{table}

\section{Simulation studies}
\label{appendix:simulations}

This appendix section includes three sets of simulation studies. The first two demonstrate the failure of asymptotic approximations in our applications and highlight the importance of using exact tests. The last set of simulations explores the power of the proposed randomization test.

\subsection{Simulation study calibrated to~\citet{li2018randomization}}
\label{appendix:li}
Our first simulation study illustrates the failure of asymptotics of the regression-based (``Neymanian'') approach proposed by \citet{li2018randomization}, in a setting calibrated to the roommates application in Section \ref{section:li}.
Specifically, consider the following setup:
\begin{itemize}
\item  $N=156$ students allocated at random in rooms of size 4, indexed by $i$.
\item  A random $a\%$ of students ($a$ is a free parameter) has $A=1$ and the rest has $A=0$. 
\item Sample $X_i\sim N(0,1)$ iid; or $X_i = S_i\mathrm{Weibull(0.3)}$, where $S_i$ is random sign; 
or $X_i \sim \mathrm{mixture}$ where $\mathrm{mixture} = (1-B) \delta_{-k} + B U[1-\epsilon, 1+\epsilon]$, where 
$\delta$ is the delta function, $k, \epsilon$ are constants and $B$ is a Bernoulli random variable such that the mean is 0.
All distributions are also normalized to have variance 1.

\item Sample $\varepsilon_i$ iid using the distributions described above.
\item Define the exposure model, $W_i = \sum_{j\in \text{room}_i, j\neq i} A_{j}$, where $\text{room}_i$ is the set of students in the same room as $i$.
\item Define outcome $Y_i = 1 + 0\cdot \iv(W_i=2) + X_i+ (0.01+A_i) \varepsilon_i$.
\end{itemize}

Note that, under this DGP, $\Yw(0) = \Yw(2)$ in distribution, and so our randomization tests remain finite-sample valid.

In this model, even though room allocation is completely randomized and there is no imbalance in room size, the joint distribution of $(A,W)$ 
has a complex correlation structure due to the group formation design. 
In particular, roughly 3-5\% of the units are exposed to $W=2$, which results in a highly leveraged exposure assignment.
 Moreover, conditional on $W_i=2$, unit $i$ is more likely to be $A_i=0$. Thus, under a mixture error distribution the outcomes $Y_i$ of such units tend to be smaller than the outcomes under other exposures. This difference becomes negligible in the limit with more samples, but it is substantial in finite-samples, and cannot be easily captured by a regression model even under a robust specification.

To illustrate this point, we regress $Y_i \sim 1(W_i=2) + X_i$ and use conservative heteroskedasticity-robust errors (``HC0"). We then 
test (at 5\% level) the hypothesis that the regression coefficient of the exposure dummy variable is zero. A partial set of our results is shown in the table below.
 Here, we want only to show the  pathological cases for the regression approach, and so we exclude the normal error setting for which regression  performs well and near the nominal level.

\renewcommand{\arraystretch}{1}
\begin{table}[tbp!]
\centering
\caption{Rejection rates from robust regression based on a simulation motivated by Li et al. (2019).}
\label{table:appendix_li}
\begin{tabular}{rr cc}
  \hline
  $a$ (\%$A=1$) & $X$  & $\varepsilon$ & Rejection rate\% \\ 
  \hline
10.00 & $N(0,1)$ & $\mathrm{Weibull}$ & 1.11  \\ 
  30.00 &  &  & 1.60  \\ 
  50.00 &  &  & 1.60  \\ 
\hline
  10.00 & $\mathrm{Weibull}$ &  $\mathrm{Weibull}$ & 2.07  \\ 
  30.00 & &  & 1.40 \\ 
  50.00 &  & & 2.40  \\ 
\hline
  10.00 & $\mathrm{mixture}$ &  $\mathrm{Weibull}$  & 1.87  \\ 
  30.00 & &  & 1.50  \\ 
  50.00 &  & & 1.50  \\ 
\hline
  10.00 & $N(0,1)$ & $\mathrm{mixture}$ & 60.79  \\ 
  30.00 & &  & 11.49  \\ 
  50.00 &  &  & 10.40  \\ 
\hline
  10.00 & $\mathrm{Weibull}$ &  $\mathrm{mixture}$ & 63.84  \\ 
  30.00 &  &  & 10.90  \\ 
  50.00 &  &  & 10.50  \\ 
\hline
  10.00 & $\mathrm{mixture}$ &  $\mathrm{mixture}$ & 66.67 \\ 
  30.00 &  &  & 8.40  \\ 
  50.00 &  &  & 10.40 \\ 
   \hline
   \hline
\end{tabular}
\end{table}

Based on the results reported in Table~\ref{table:appendix_li}, we observe that 
with Weibull errors (heavy tailed), the regression-based test has a size distortion and tends to under-reject. Under a mixture distribution for the errors, regression severely over-rejects. For instance, even with $N(0,1)$ covariates, we observe rejection rates up to roughly 61\%.
In general, the regression-based test deteriorates under imbalanced designs.

In contrast, the randomization test is finite-sample valid as expected. Table \ref{table:appendix_li} shows a partial set of results relating to the pathological cases. We see that the randomization test achieves near-nominal level performance, with deviations from the nominal level due to Monte Carlo error.

\begin{table}[tbp!]
\centering
\caption{Rejection rates from robust regression and the group formation randomization test of Procedure~\ref{proc:general-W}.}
\label{table:appendix_li}
\begin{tabular}{rr ccc}
  \hline
  $a$ (\%$A=1$) & $X$  & $\varepsilon$ & regression & randomization test \\ 
  \hline
  10.00 & $N(0,1)$ & $\mathrm{mixture}$ & 60.79 & 4.96  \\ 
  30.00 & &  & 11.49 & 5.53 \\ 
  50.00 &  &  & 10.40  &  5.13 \\ 
\hline\hline
\end{tabular}
\end{table}
\subsection{Simulation study calibrated to \citet{cai2017interfirm}}
\label{appendix:cai_sim}

We now consider the following simulation setup inspired by the analysis of \citet{cai2017interfirm} in Section \ref{section:cai}. Here we focus on a subset of the data to illustrate the key intuition.
We have 13 firms in the same sector and subregion, 2 of the firms are ``large'' and the remainder are ``small.'' In particular, their sizes in terms of log number of employees are
$A = (5, 5, Z_1, \ldots Z_{11})$ where $Z_i\sim \mathrm{Unif}[1,3]$ are iid uniform. Following~\cite{cai2017interfirm} we randomize the firms into two 
groups, one of type ``mixed-size" (SL) and another of type ``small-size" (S). 
Since $Z_i$ are iid we can simply set  as $L = (1, 1, 1, 2, 2, \ldots, 2)$, such that 
group 1 is of type (SL) with two large firms and one small firm, and group 2 is of type (S) with all firms being small.
The exposure of firm $i$ is defined as the average group size of other firms in $i$'s group:
$$
W_i = \frac{1}{|\text{group}_i|} \sum_{j\in\text{group}_i} A_j.
$$
We sample $\epsilon_i = N(0,\sigma_i^2)$ where $\sigma_i^2 = 1/|\text{group}_i|$ is the reciprocal of $i$'s group size, and set the outcome model
as $Y_i = 0\cdot W_i + \epsilon_i$. 

A conventional econometric approach, such as a regression inspired by a linear-in-means model, is to regress $Y \sim W$ or $Y \sim W + A$ and test whether the coefficient on $W$ is zero. 
However, both approaches are severely biased even when we condition on the same sector, subregion and firm sizes.
In a simulated study with 10,000 replications based on this model, the nominal 5\% rejection rate from the regression $Y \sim W$ is 15.5\%; and the rejection rate from the regression $Y \sim W + A$ is 17.6\%.

%
The problem here is that uncertainty quantification under linear regression does not 
take into account the correlation structure in $W$. For instance, in this model, both large firms have the exact same exposure regardless of the particular treatment assignment. Due 
to the problem structure, with high probability the errors in these two large groups can both be extreme leading to a spurious correlation between $Y$ and $W$. 
Conditioning on firm characteristics in a regression model cannot fix this issue. In contrast, a randomization test can leverage the true correlation structure in $W$ and has the correct level in finite-samples.

\subsection{Power simulations}\label{appendix:power}
Here, we provide simulation results on the power
of our tests. Our simulation setting mimics our roommate application in Section \ref{section:li}. We consider $N = 156$ units split into
groups of four units. As in the application, $104$ of these units have attribute
$A=1$, while the rest have attribute $A=0$. The peer group exposure for unit $i$ is the sum 
of group peers with $A=1$ --- thus, $\Wdom = \{0, 1, 2, 3\}$.
Throughout the simulation, we will
focus on testing the null hypothesis $H_0^{1,0}: \Yw_i(1) = \Yw_i(0)$, so our simulation
setup only needs to specify the potential outcomes $\Yw_i(0)$ and $\Yw_i(1)$.  

\subsubsection{No covariates}
\label{section:simul-nocov}

We first simulate IID potential outcomes from
\begin{flalign*}
  \Yw_i(0) &= 4 \times \text{Beta}(10, 3),  \\
  \Yw_i(1) &= \min\{ \Yw_i(0) + \tau, 4 \} . 
\end{flalign*}
With these specifications, the potential outcomes live on a $4$ points scale,
like the original GPA outcomes, and the mean is in the same ballpark as the
original data. When $\tau = 0$, the null hypothesis is true so we expect a
rejection rate at the nominal level. When $\tau \neq 0$, the alternative is
true: to study the power, we generated potential outcome with
$\tau$ taking the values $\tau = 0, 0.1, 02, 0.3, 0.4, 0.5, 0.6, 0.7, 0.8, 1$.
For each value of $\tau$ we generate a schedule of potential outcomes. We then
generate 300 draws of $Z^\text{obs}$ using an stratified randomized design. 
Then for each $Z^\text{obs}$, we run our test to obtain
a $p$-value with 1000 draws
from the condition distribution using the difference-in-means statistic. Finally, we compute the rejection rate over all draws at level   $\alpha = 0.05$. Figure~\ref{fig::simulation1} summarizes
the results.
\begin{figure}[h]
  \centering
  \includegraphics[scale=0.35]{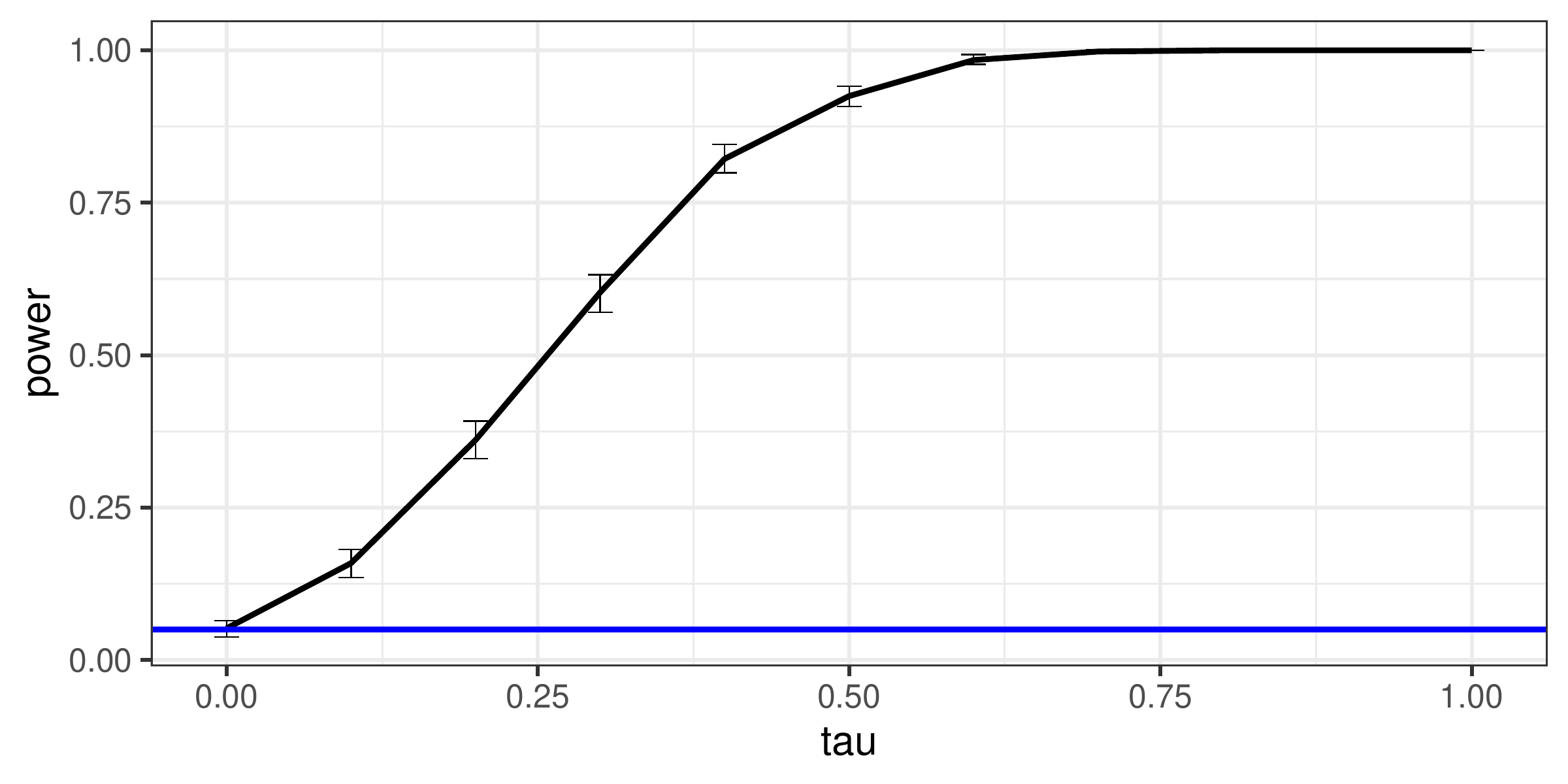}
  \caption{Power of test without using covariates}\label{fig::simulation1}
\end{figure}
As expected,   the rejection rate at $\tau = 0$ is equal to
$0.05$, while the power
increases as $\tau$ increases. For $\tau > 0.7$, the power is essentially
1. The power reaches $0.5$ at $\tau = 0.25$.

\subsubsection{Leveraging covariate information}

In our second set of simulations, we illustrate the power gains that can be
obtained by stratifying on both the attribute of interest and additional
covariates. In this section, in addition to an attribute $A_i$, each unit $i$
has a covariate binary $X_i$. We simulated data so that half of the unit with
attribute level $A_i=1$ has covariate value $X_i=0$ and half has the value $X_i = 1$,
and similarly for the units with attribute level $A_i=0$. That is, 
$\sum_{i=1}^N A_i X_i = \sum_{i=1}^N A_i (1-X_i) = 52$ and
$\sum_{i=1}^N (1-A_i) X_i = \sum_{i=1}^N (1-A_i) (1-X_i) = 26$. We then simulate IID  potential outcomes from 
\begin{align}
  \Yw_i(0) &= 4 \times \{ (1-X_i) \text{Beta}(10, 3) + X_i \text{Beta}(5, 5)\} \nonumber \\
  \Yw_i(1) &= \min\{ \Yw_i(0) + \tau, 4 \}  \label{app:y1}
\end{align}
That is, the distribution of the control potential outcomes is different for
the two values of the covariate. 
For each value of $\tau$ we generate a schedule of potential outcomes, then
we take two approaches:
\begin{enumerate}
\item We generate 300 draws of $Z^\text{obs}$ using a stratified randomized design that stratifies
  on both the attribute and $X_i$. Then for each $Z^\text{obs}$, we run our
  test to obtain a $p$-value. Finally, we compute the rejection rate over all  draws at level $\alpha = 0.05$. 
\item We do the same as above, but stratifying only on the attribute, not on $X_i$.
\end{enumerate}

\begin{figure}[h]
  \includegraphics[scale=0.35]{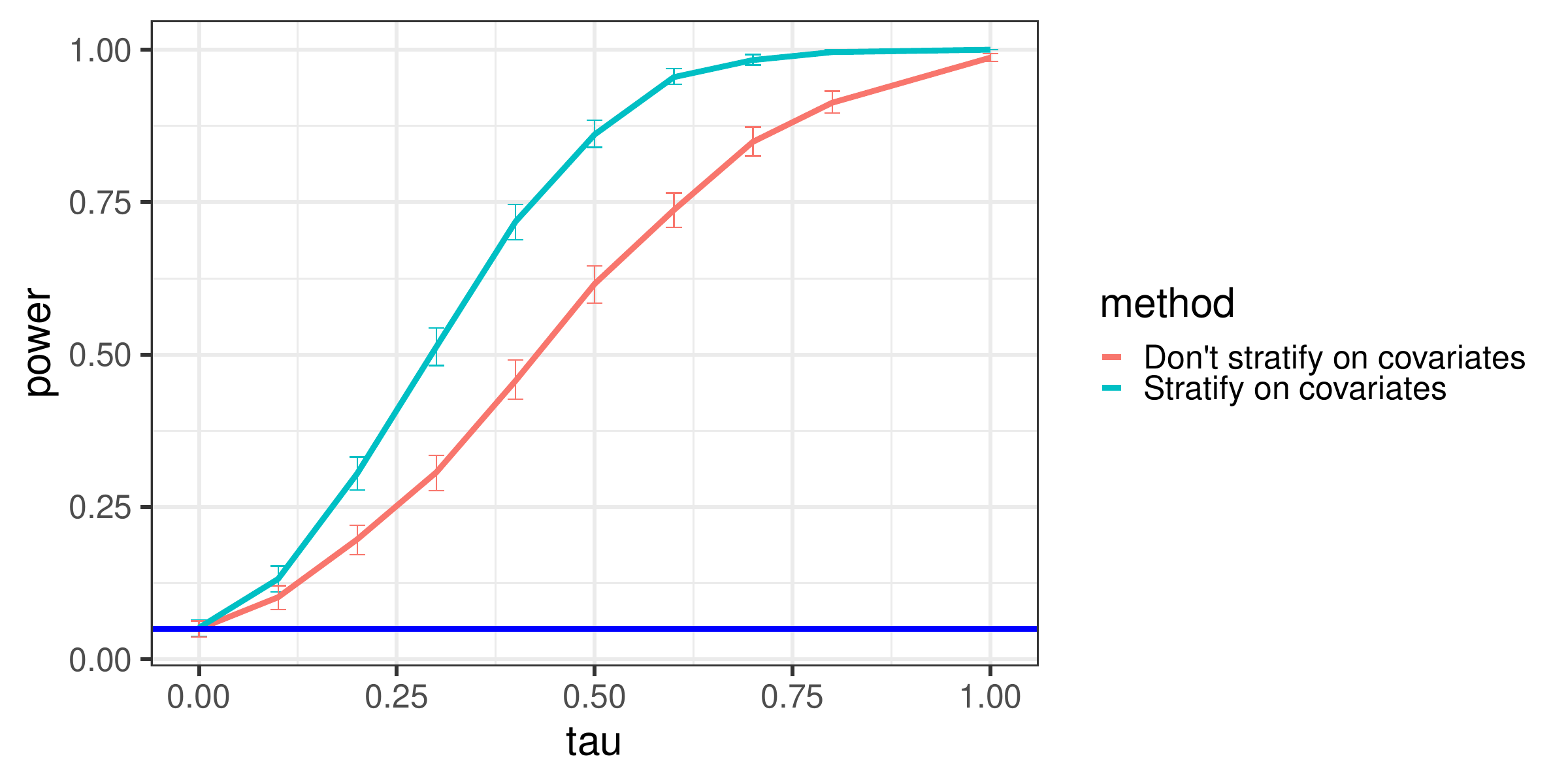}
  \caption{Studying the power gains from stratifying on covariates}
  \label{fig:cov}
\end{figure}

Figure~\ref{fig:cov} summarizes the results. 
Both methods the rejection rate is equal to the
nominal rate when $\tau=0$, and the power increases with $\tau$. The power is much higher when leveraging covariate information. For instance, when
$\tau \approx 0.35$, the power of the test leveraging covariates is about $0.7$,
while that of the test ignoring covariates is below $0.5$.

\subsubsection{Examining the Hodges--Lehmann estimator}\label{appendix:HL}
We also ran simulations with the same setup as in Section~\ref{section:simul-nocov}, but 
this time we examined the properties of the Hodges--Lehmann estimator, computed using the 
studentized test statistic. This setting is of particular interest because the effect is 
not constant and additive. For a given value of $\tau$, the average treatment effect is:
\begin{equation*}
    \tau^\ast = \frac{1}{N} \sum_{i=1}^N \{ \Yw_{i}(1; \tau) - \Yw_i(0)\},
\end{equation*}
where $\Yw_{i}(1; \tau)$ is the `treated' potential outcome under Definition~\eqref{app:y1}.
The discussion of Section~\ref{section::testweaknull} predicts 
that the confidence interval is valid asymptotically. Table~\ref{tab:hl-simuls} below reports 
the coverage, the size of the interval, and the fraction of intervals that do not contain 
$0$ (akin to the ``power''). The results confirm our theory. We see that the coverage is nominal, 
and that our the size of our confidence intervals is reasonable ($80\%$ of intervals for 
$\tau=0.5$ fail to cover $0$).

\begin{table}[]
    \centering
        \caption{Summary of simulations for the Hodges--Lehmann estimator.}
    \label{tab:hl-simuls}
    \begin{tabular}{c|c|c|c|c|c|c|}
        $\tau$ & 0 & 0.1 & 0.2 & 0.3 & 0.4 & 0.5  \\
        \hline
        $\tau^\ast$ & 0 & 0.1 & 0.198 & 0.291 & 0.378 & 0.458 \\
        coverage & 0.96 & 0.96 & 0.94 & 0.94 & 0.97 & 0.92 \\
        Interval length & 0.65 & 0.63 & 0.62 & 0.61 & 0.59 & 0.56 \\
        Power & 0.035 & 0.06 & 0.14 & 0.34 & 0.54 & 0.80 \\
        \hline
    \end{tabular}
\end{table}

\clearpage
\color{black}

\newcommand{\nns}{\mathbf{m}^s}
\newcommand{\nnc}{\mathbf{m}^c}

\newcommand{\BB}{\mathcal{B}}
\newcommand{\commenteq}[1]{\text{\footnotesize #1}}
\section{Proofs}\label{sec::proofs}
\subsection{Proof of Theorem~\ref{thm1}}
\begin{theorem}
    \TheoremOne
\end{theorem}
\begin{proof}
We start with two lemmas.
\begin{lemma} \label{lemma:proof1}
Suppose that Conditions (a)--(c) of Theorem~\ref{thm1} hold. 
Let $\BB\in \mathcal{O}(\Wdom^N; \SAU)$ be an orbit such that $P(\BB) > 0$.
Then, for any $\pi\in\SAU$, we have 
$$
P(\pi L  \mid \{W\in\BB\}, U) = P(L \mid \{W\in\BB\}, U).
$$
\end{lemma}
\begin{proof}[Proof of Lemma \ref{lemma:proof1}]
$L$ determines both $U$ and $W$, and so
\begin{equation}\label{proof1_1}
P(W \in \BB, U \mid L) = \iv\{ \wl(L)\in \BB\} \cdot \iv\{U = \ul(L)\}.
\end{equation}
Similarly,
\begin{align}\label{proof1_2}
P(W \in \BB, U \mid \pi L) & =  \iv\{ \wl(\pi L)\in \BB\} \cdot \iv\{U = \ul(\pi L)\} 
\quad\quad\commenteq{from~\eqref{proof1_1}}
\nonumber\\
& = \iv\{ \pi \wl( L)\in \BB\} \cdot \iv\{U = \pi \ul(L)\} 
\quad\quad\commenteq{from Conditions (b)-(c)}\nonumber\\
& =  \iv\{ \wl( L)\in \BB\} \cdot \iv\{\pi^{-1} U = \ul(L)\} 
\quad\quad\commenteq{from orbit property of $\BB$} \nonumber\\
& =  \iv\{ \wl( L)\in \BB\} \cdot \iv\{ U = \ul(L)\} 
\quad\quad\commenteq{$\pi U = U$ since $\pi\in\SAU$ } \nonumber\\
& = P(W \in \BB, U \mid L) . 
\quad\quad\commenteq{from~\eqref{proof1_1}}
\end{align}
It follows that
\begin{align}
&P(W\in\BB, U \mid \pi L) P(\pi  L)  = P(W\in\BB, U \mid L) P( L) , \quad\quad\commenteq{From~\eqref{proof1_2} and Condition (a)} \nonumber\\
\Rightarrow &
\frac{P(W\in\BB, U \mid \pi L) P(\pi L)}{P(\BB)}  = \frac{P(W\in\BB,U \mid L) P(L)}{P(\BB)} , \quad\quad\commenteq{From $P(\BB) > 0$ } \nonumber\\
\Rightarrow &
P(\pi L \mid \{W\in\BB\}, U)  = P( L \mid \{W\in\BB\}, U).\nonumber
\end{align}
\end{proof}
Lemma~\ref{lemma:proof1} shows that $L$ retains its symmetry even conditionally on $W$ beloning to some orbit $\mathcal{B}$ and conditional on focal selection $U$. The subspace where its symmetry holds is exactly the permutation subgroup $\SAU$, which leaves $A$ and $U$ fixed.

\begin{lemma}\label{lemma:proof1_2}
Let $\h\in\Wdom^N$ be a fixed exposure vector, and define $$
\Ldom(\h) = \{ L\in\Ldom: \wl(L) = \h\}.
$$
Then,  for any $\pi\in\SAU$, we have that
$$\Ldom(\pi \h) = \{ \pi L : L\in \Ldom(\h)\}.$$
\end{lemma}
\begin{proof}[Proof of Lemma \ref{lemma:proof1_2}]
The result follows from the equivariance property of $\wl$ in Condition (b).
Specifically, equivariance implies that for any $L \in \Ldom(\h)$ then $\pi L\in \Ldom(\pi \h)$. 
Conversely, for any $L'\in \Ldom(\pi \h)$ then $\pi^{-1} L' \in \Ldom(\h)$. 
\end{proof}
The crucial result in Lemma~\ref{lemma:proof1_2} is that there exists a 1-1 mapping between the sets $\Ldom(\h)$ and $\Ldom(\pi\h)$ for any $\pi\in\SAU$.

We are now ready to prove the main result of Theorem~\ref{thm1}. For a fixed $\h\in\Wdom^N$:
\begin{align}\label{proof1_3}
P(W = \h \mid \BB, U ) =  \sum_{L\in\Ldom} \iv\big\{\wl(L) = \h\big\} P(L \mid \BB, U) = \sum_{L\in\Ldom(\h)} P(L \mid \BB, U),
\end{align}
where ``$|\BB, U$" is shorthand for conditioning on event ``$\{W \in \BB\}, U$". 
Moreover, for any $\pi \in\SAU$:
\begin{align}\label{proof1_4}
P(W = \pi\h \mid \BB, U)  & =   \sum_{L\in\Ldom} \iv\big\{\wl(L) = \pi\h\big\} P(L \mid \BB, U) 
\quad\quad\commenteq{From~\eqref{proof1_3}}\nonumber\\
& =\sum_{L\in\Ldom(\pi \h)} P(L \mid \BB, U)  \nonumber\\
& = \sum_{L\in \Ldom(\h)} P(\pi L \mid \BB, U) 
\quad\quad\commenteq{From Lemma~\ref{lemma:proof1_2}}
\nonumber\\
& = \sum_{L\in \Ldom(\h)} P(L \mid \BB, U) 
\quad\quad\commenteq{From Lemma~\ref{lemma:proof1}}
\nonumber\\
&=  P(W=\h \mid \BB, U).
\end{align}
$\BB$ is an orbit, and so it can be generated by any of its elements. 
Since $W\in\BB$, the orbit can be generated by $W$, and so $\BB = \{\pi W: \pi\in\SAU\}$.
Therefore, conditional on $\{W\in\BB\}$ and focals $U$, the orbit $\BB$ is the entire domain of $W$. The result in~\eqref{proof1_4} now implies that 
$W$ is conditionally uniform given $\BB$ and $U$.
\end{proof}

\subsection{Proof of Lemma \ref{lemma::conditions-b-c}}\label{proof:conditions}

{\bf Equivariance of $\wl$.}~The exposure is defined in Eq.~\eqref{eq:exposure} as $w_i(Z) = \{ A_j : j \in Z_i\}$. On the domain of group levels, this can be re-written as:
$$
\wl_i(L) = \{A_j : L_j=L_i, j\neq i\}.
$$
Now, let $\pi\in\SN(A)$ be any transposition acting on $L$, i.e., a single swap between labels 
$L_i, L_j$ of units $i$ and $j$, respectively. After the swap, $i$ is in the ``room" that $j$ was, and $j$ is in the ``room" that $i$ was. From the 
definition of $\wl$ above, the exposures are only a function of other units' attributes in the room, and so units $i$ and $j$ swap exposures.
The exposures of all units other than $i,j$ are unaffected because $i$ and $j$ have the same attribute ($A_i = A_j$) due to $\pi\in\SN(A)$.

Thus, we proved that $\wl(\pi L) = \pi \wl(L)$ whenever $\pi$ is a transposition. Since every permutation is a composition of transpositions,  the result holds for any permutation in $\SN(A)$. Moreover, the result holds for $\pi\in\SAU$ as well since $\SAU$ is a subgroup of $\SN(A)$. 

{\bf Equivariance of $\ul$.}~Recall the definition of focal selection in our setting, as defined in Eq.~\eqref{eq:focals-1}, 
$u_i(Z) = 1$ if and only if $w_i(Z) \in \{\h_1, \h_2\}$. 
With a slight abuse of notation, this can be re-written as $\ul(L) = \iv\{ \wl(L) \in\{\h_1,\h_2\} \}$, where the operation on the  right-hand side is understood element-wise.
Thus, 
$$
\ul(\pi L) =\iv\{ \wl(\pi L) \in\{\h_1,\h_2\} \} = 
 \iv\{\pi \wl( L) \in\{\h_1,\h_2\} \} =  \pi \iv\{ \wl( L) \in\{\h_1,\h_2\} \}.
$$
Here, the second equality follows from equivariance of $\wl$ and the last equality follows from the element-wise operation.

\subsection{Proofs of Condition~(a) for the designs in Section~\ref{sec:perm_designs}}\label{proof:stratified}


{\bf Proof for stratified randomized design.}
In the stratified randomized design,  define $\nns: \Lset^N \to \mathbb{N}^{|\Aset| \times |\Lset|}$ as
$$
\nns(L)_{a,k} = \sum_{i\in\Ib} \iv(L_i=k) \iv(A_i=a) , 
$$ 
which counts how many units with attribute $A_i=a$ are assigned to group label $k$. Then, a stratified randomized satisfies $\pr (L) \propto \iv\{\nns(L) = \nak_A\}$, where $\nak_A$ is fixed.
For any permutation $\pi\in\SN(A)$, and any pair $(a,k)$, we have 
\begin{align}
\nns(\pi L)_{a,k} & =  \sum_{i\in\Ib} \iv\{ (\pi L)_i=k\} \iv(A_i=a)  \nonumber\\
& = \sum_{i\in\Ib} \iv(L_i=k) \iv\{(\pi A)_i=a\} 
\quad\quad\commenteq{From identity, $(\pi x)'y = x'(\pi y)$, for any $x,y\in\mathbb{R}^N$ }  \nonumber\\ 
& = \sum_{i\in\Ib} \iv(L_i=k) \iv(A_i=a) 
\quad\quad\commenteq{ $\pi A = A$ since $\pi\in\SN(A)$ }  \nonumber\\ 
& = \nns(L)_{a,k}.
\end{align}
This results immediately implies that $P(\pi L) = P(L)$ for any $\pi\in\SN(A)$. This holds also in the focal selection setting. That is, $P(\pi L) = P(L)$ for any $\pi\in\SAU$ since 
$\SAU$ is a subgroup of $\SN(A)$. Thus, Condition (a) holds.

{\bf Proof for the completely randomized design.}
In the completely randomized design,  define $\nnc: \Lset^N \to \mathbb{N}^{|\Lset|}$ 
as 
$$
\nnc(L)_{k} = \sum_{i\in\Ib} \iv(L_i=k) , 
$$
which counts how many units  are assigned to group label $k$. Then, $\pr (L) \propto \iv\{\nnc(L) = \nak\}$, where $\nak = (n_1, \ldots, n_K)$ denotes how many units are to be assigned to each label, and is fixed.
For any permutation $\pi\in\SN$ and label $k$, we have 
\begin{align}
\nnc(\pi L)_{k} & =  \sum_{i\in\Ib} \iv\{ (\pi L)_i=k\} 
=  \sum_{i\in\Ib} \iv\{ L_i=k\} = \nn(L)_k.
\end{align}
This results immediately implies that $P(\pi L) = P(L)$ for any $\pi\in\SN$. This holds also for any subgroup of $\SN$, including $\SN(A)$ and $\SAU$. Both of these subgroups 
keep the attributes fixed, and so Procedures~\ref{proc:sharp-permute} and~\ref{proc:nonsharp-permute} in the completely randomized design are equivalent to the stratified randomized design 
with parameter $\nak_A = \nns(L)$.  Thus, Condition (a) holds. 

\end{document}